\newtcolorbox{resp}[1][]{%
enhanced jigsaw,%
size=small,%
boxrule=1pt,%
halign title=flush center,%
coltitle=black,%
breakable,%
drop shadow=black!50!white,%
attach boxed title to top left={xshift=1cm,yshift=-\tcboxedtitleheight/2,yshifttext=-\tcboxedtitleheight/2},%
minipage boxed title=3cm,%
boxed title style={%
	colback=white,%
	size=fbox,%
	boxrule=1pt,%
	boxsep=2pt,%
	underlay={%
		\coordinate (dotA) at ($(interior.west) + (-0.5pt,0)$);
		\coordinate (dotB) at ($(interior.east) + (0.5pt,0)$);
		\begin{scope}[gray!80!black]
			\fill (dotA) circle (2pt);
			\fill (dotB) circle (2pt);
		\end{scope}
	}%
},%
#1%
}
\definecolor{myred}{RGB}{182,20,50}
\definecolor{myblue}{RGB}{227, 245, 250}
\definecolor{myyellow}{RGB}{255,255,0}
\definecolor{mygreen}{RGB}{250, 250, 235}
\definecolor{myorange}{RGB}{255,128,0}
\definecolor{mygray}{RGB}{192,192,192}
\newtheorem{mydef}{Definition}
\newtheorem{mythm}{Theorem}
\newtheorem{myprob}{Problem}
\newtheorem{mylem}{Lemma}
\newtheorem{remark}{Remark}
\newtheorem{assumption}{Assumption}
\definecolor{myblue}{RGB}{135,206,235}
\definecolor{myyellow}{RGB}{255,255,0}
\definecolor{mygreen}{RGB}{0,255,0}
\title{On the Construction of Barrier Certificates: \\ A Dynamic Programming Perspective}
\author{Yu Chen, Shaoyuan Li and Xiang Yin%
\thanks{This work was supported by  the National Natural Science Foundation of China (62061136004, 62173226, 61833012).}
	\thanks{Yu Chen, Shaoyuan Li and Xiang Yin are with School of Automation and Intelligent Sensing, Shanghai Jiao Tong University, Shanghai 200240, China.
	{\tt\small \{yuchen26, syli, yinxiang\}@sjtu.edu.cn}. Corresponding Author: Xiang Yin.}
}
\begin{document}
	
\maketitle

\begin{abstract}
In this paper,  we revisit the formal verification problem for stochastic dynamical systems over finite horizon using barrier certificates. 
Most existing work on this topic focuses on safety properties by constructing barrier certificates based on the notion of $c$-martingales. 
In this work, we first provide a new insight into the conditions of existing martingale-based barrier certificates from the perspective of dynamic programming operators. 
Specifically, we show that the existing conditions essentially provide a bound on the dynamic programming solution, which exactly characterizes the safety probability.  
Based on this new perspective, we demonstrate that the barrier conditions in existing approaches could be unnecessarily conservative over unsafe states. 
To address this, we propose a new set of safety barrier certificate conditions that are strictly less conservative than existing ones, thereby providing tighter probability bounds for safety verification.  
We further extend our approach to the case of reach-avoid specifications by providing a set of new barrier certificate conditions. 
We also illustrate how to search for these new barrier certificates using sum-of-squares (SOS) programming. 
Finally, we use two numerical examples to demonstrate the advantages of our method compared to existing approaches. 
\end{abstract}

\begin{IEEEkeywords}
Formal Verifications, Barrier Certificates, Dynamic Programming, Stochastic Systems.
\end{IEEEkeywords}
\section{Introduction} 
Safety is a major concern in cyber-physical systems (CPS), especially as they become more integrated into critical infrastructures and develop complex functionalities. 
Many CPS operate in dynamic environments with uncertainty, leading to stochastic behaviors. 
For example, autonomous vehicles must meet safety requirements, such as avoiding collisions, despite uncertain road conditions and unpredictable traffic. 
Ensuring safety in such systems requires rigorous verification methods that account for randomness and environmental variability. In recent years, there has been growing interest in safety verification techniques for stochastic dynamical systems, aiming to provide formal guarantees such that they can operate reliably under uncertain conditions~\cite{belta2019formal,lavaei2022automated,yin2024formal}.

Safety verification of stochastic systems is often formulated using the Bellman equations, with the safe probability computed via dynamic programming methods; see, e.g., \cite{abate2008probabilistic}. However, these first-principle methods are in general computationally intractable for high-dimensional systems. An alternative approach involves using abstraction models for verification. For example, \cite{esmaeil2014precise} employs Markov chain abstractions to verify the probabilistic invariance of stochastic systems. A key advantage of abstraction-based methods is their ability to accommodate complex specifications beyond safety~\cite{zhong2023automata} while leveraging well-established algorithms for finite-state systems. Nevertheless, they still suffer from the curse of dimensionality, limiting their scalability in high-dimensional settings.

A more computationally tractable alternative for safety verification is the abstraction-free approach using \emph{barrier certificates}~\cite{prajna2007framework,salamati2022data,nejati2023formal}, which provides Lyapunov-like guarantees on system behavior. Initially, barrier certificates were used to bound the safety probability over an infinite horizon~\cite{prajna2007framework,8263999,10.1145/3126508}. This approach was later extended to finite-horizon safety verification~\cite{steinhardt2012finite,santoyo2021barrier,xue2024finite} since the infinite horizon safety is hard to achieve for most of practical systems. More recently, barrier certificates have been applied to verifying stochastic systems under linear temporal logic (LTL) specifications~\cite{jagtap2018temporal} and to safety verification in  distributionally robust settings, where the exact stochastic distribution is unknown but constrained within an ambiguity set~\cite{schon2024data,chen2025distributionally}.

\textbf{Our Contributions: }
In this work, we address the problem of constructing barrier certificates for formal verification, including both safety and reach-avoid specifications, of stochastic systems over finite time horizons from a dynamic programming (DP) perspective.
First, we revisit the safety verification problem and demonstrate that existing barrier certificate conditions effectively provide a lower bound for the DP solution, which precisely characterizes the safety probability. 
Through this DP perspective, we further reveal that some existing barrier certificate conditions impose unnecessary conservatism over unsafe states. 
Building on this insight, we propose a new type of barrier certificate for safety verification that reduces conservatism by relaxing parameter constraints in prior methods.
We then extend this DP-based construction to reach-avoid specifications, a less explored area in the literature. 
We demonstrate that our approach overcomes the fundamental challenge of extending standard barrier certificate techniques from safety verification to reach-avoid verification. 
Finally, we formulate the search for these new barrier certificates as sum-of-squares (SOS) programs and validate their effectiveness through comparative experiments with existing methods on two illustrative examples. 

\textbf{Related Works: }
The most common approach for constructing finite-time safety barrier certificates relies on the notion of $c$-martingales, which provide an upper bound on the unsafe probability; see, e.g., \cite{steinhardt2012finite,jagtap2018temporal,santoyo2021barrier}. However, this standard framework can be overly conservative in some cases and cannot directly handle reach-avoid verification, as it does not derive a lower bound on the reach-avoid probability. 
Our work addresses these limitations by proposing relaxed conditions for barrier certificate construction compared to existing methods. 
More fundamentally, we establish a unified dynamic programming framework for barrier certificate synthesis, which not only covers both safety and reach-avoid specifications but also enables potential extensions to more complex specifications.
In \cite{xue2024finite}, the authors propose a finite-time reach-avoid barrier certificate using a switched-system-based method. 
Compared to their approach, our formulation achieves a simpler certificate structure and demonstrates better experimental performance. 
We also note that many existing works focus on constructing barrier certificates for infinite-horizon reach-avoid specifications in stochastic systems \cite{anand2022k,vzikelic2023learning,xue2024sufficient,meng2024stochastic,abate2025stochastic,henzinger2025supermartingale}. 
In contrast, our work considers finite-time verification, which requires different analytical tools.

Finally, we would like to clarify that our work is not the  first one attempting to  connect barrier certificates with dynamic programming techniques. 
For instance, in \cite{steinhardt2012finite,laurenti2024unifyingsafetyapproachesstochastic}, the authors also derive safety barrier certificates by showing that their conditions provide an upper bound on the unsafe probability computed via dynamic programming. Compared to these barrier certificates, ours is more general and can reduce to their forms by fixing one parameter.
Moreover, these works primarily reprove barrier certificate results that can also be derived from other perspectives, such as super-martingales.
In contrast, by deeply exploring the connection between barrier certificates and dynamic programming, we identify the conservatism of existing methods over the unsafe region and address it by proposing a new safety barrier certificate.

The rest of the paper is organized as follows. 
We present the necessary preliminaries in Section \ref{sec:prelinimary} and formally formulate the problem in Section~\ref{section:formulation}. 
Next, in Section~\ref{section:revisit}, we revisit existing martingale-based barrier certificates for safety verification from a dynamic programming perspective. 
Building on this new perspective, Section~\ref{sec:dpsafetybound} introduces a novel barrier certificate with relaxed constraints for safety verification and discusses its extension to reach-avoid specifications. 
Section VI demonstrates how to synthesize the proposed barrier certificates using sum-of-squares programming. Finally, we illustrate our method with two case studies in Section~\ref{sec:casestudy} and conclude the work in Section~\ref{sec:con}.

\section{Preliminary}\label{sec:prelinimary}
\textbf{Notations: }
We denote by $\mathbb{R}$, $\mathbb{R}_{\geq 0}$ the set of all real numbers and the set of all non-negative real numbers, respectively. 
Given a Borel space $A$, $\mathcal{B}(A)$ and $\mathcal{P}(A)$ represent its Borel $\sigma$-algebra
and the set of all Borel probability measures on $A$, respectively. 
A transition kernel is a mapping $\mathbf{t}:A\times \mathcal{B}(A)\to [0,1]$ such that $\mathbf{t}(a,\cdot): \mathcal{B}(A) \to [0,1]$ is a Borel probability measure for all $a \in A$ and $\mathbf{t}(\cdot,B): A\to [0,1]$ is measurable for all $B \in \mathcal{B}(A)$.  
The probability measure $\mathbf{t}(a,\cdot)$ is also denoted by $\mathbf{t}(\cdot\mid a)$.

\subsection{Discrete-Time Stochastic Dynamical Systems}
In this work, we consider an autonomous discrete-time stochastic dynamical system, which is a 6-tuple 
\begin{equation}\label{eq:system}
        \mathbb{S}=(X,X_0,W,\mu,f,T),
\end{equation}
where 
\begin{itemize}
    \item $X=\mathbb{R}^n$ is the state space;
    \item $X_0 \subseteq X$ is the set of initial states;
    \item $W \subseteq \mathbb{R}^m$ is the disturbance set;
    \item $\mu \in \mathcal{P}(W)$ is the probability measure of stochastic disturbance;
    \item $f:X \times W \to X$ is continuous function  representing the dynamic of the system;
    \item $T\in \mathbb{N}$ is a positive integer representing the system horizon.
\end{itemize}

Without loss of generality, we do not consider the distribution of the initial state further.  
Instead, we will consider the verification problem \emph{for all} possible initial states.
Given an initial state $x_0\in X_0$, the trajectory of system over horizon $T$ is a sequence
$\mathbf{x}_{0:T}=(x_0,x_1,\dots,x_T)$ such that
\[
 x_{t+1}\sim \mathsf{K}(\cdot\mid x_t), \quad \forall t=0,1,\dots,T-1,
\]
where $\mathsf{K}$ is the transition kernel of system state satisfying that for $B \in \mathcal{B}(X)$ and $x \in X$, $\mathsf{K}(B\mid x)=\mu(\{ w \in W \mid f(x,w) \in B  \})$.
Let $X_i=X$ be the system state set at time $i$ for $i=1,2,\dots, T$. 
We denote by 
\[
\textsf{Pr}_{x_0} \in \mathcal{P}(\{x_0\} \times X_1 \times X_2 \cdots \times X_T)
\]
the unique probability measure induced by the transition kernels $\mathsf{K}$ over the sample space of all state trajectories starting from $x_0$ with horizon $T$~\cite{bertsekas1996stochastic}.

\subsection{Formal Specifications and Their DP Formulations}
In this work, we focus on two types of formal specifications:
\begin{enumerate}
    \item
    the \emph{safety} specification, which requires the system to stay  within a safety set during the entire horizon; and
    \item 
    the \emph{reach-avoid} specification, which requires the system to reach a target set while remaining within a safety set.
\end{enumerate}
Their formal definitions are presented as follows. 

\begin{mydef} 
Let $S \in \mathcal{B}(X)$ be a  Borel set representing  the safe region
and  $G \in \mathcal{B}(X)$ be a  Borel set representing  the target region   with $G \subseteq S$. 
We denote by $\mathbf{x}_{0:T}=(x_0,x_1,\dots,x_T)$ the state trajectory initial from $x_0 \in X_0$.  
The probability of satisfying the safety specification (or safe probability) is defined by
\begin{equation} \label{eq:safetydef}
\mathsf{SA}_{x_0}(S):=\mathsf{Pr}_{x_0}(\{ \mathbf{x}_{0:T} \mid
           \forall t \in [0,T], x_{t} \in S \}), 
\end{equation}   
and the probability of  satisfying the reach-avoid specification (or reach-avoid probability) is defined by
\begin{align} \label{eq:reach-avoiddef} 
&\mathsf{RA}_{x_0}(G,S):=\\
&\mathsf{Pr}_{x_0}(\{  \mathbf{x}_{0:T}  \mid 
\exists t \!\in\! [0,T],   (x_t \!\in\! G  \!\wedge\!  (\forall t' \!\in\! [0,t], x_{t'} \!\in\! S)) \}).\nonumber
\end{align} 
\end{mydef}
According to \cite{abate2008probabilistic,summers2010verification}, the safe probability and the reach-avoid probability, in principle, can be compute precisely  by dynamic programming techniques. 
Formally, let $\mathbf{v}:X\to \mathbb{R}$ be a measurable function on $X$ 
and  we define operator $\mathbb{T}$ on  $\mathbf{v}$ by: for any $x\in X$,
\begin{equation}\label{eq:operator}
\mathbb{T}(\mathbf{v})(x):=\mathbf{1}_{X\setminus S}(x) + \mathbf{1}_{S}(x)  \int_{X} \mathbf{v}(x') \mathsf{K}(\mathrm{d}x'\mid x),  
\end{equation}
where  $\mathbf{1}_{X'}:X\to \{0,1 \}$ is the indicator function such that
\begin{equation}
    \mathbf{1}_{X'}(x)=1
    \ \Leftrightarrow\ x\in X'.
\end{equation}
One can check easily that  operator $\mathbb{T}$ has the following two properties:
\begin{enumerate}[leftmargin=*, labelwidth=1.5em, align=left]
    \item 
    \textbf{monotonicity: } 
    For any two measurable functions 
    $\mathbf{v},\hat{\mathbf{v}}$, we denote by $\mathbf{v} \leq \hat{\mathbf{v}}$ if   $\mathbf{v}(x) \leq \hat{\mathbf{v}}(x),\forall x\in X$. Then
\begin{equation}\label{eq:monoto}
        \mathbf{v}\leq \hat{\mathbf{v}} \implies  \mathbb{T}(\mathbf{v})\leq \mathbb{T}(\hat{\mathbf{v}}).
    \end{equation}
    \item 
    \textbf{linearity in safe region: }  For any  $\rho,\eta \in \mathbb{R}$, 
    let $\rho\mathbf{v}+\eta$ be the function such that $(\rho\mathbf{v}+\eta)(x)=\rho\mathbf{v}(x)+\eta,\forall x\in X$. Then for $x \in S$   we have
\begin{equation}\label{eq:transilationinvariance}
    \mathbb{T}(\rho\mathbf{v}+\eta)(x) = \rho  \mathbb{T}(\mathbf{v})(x) + \eta.
\end{equation}
Note that the linearity only holds for states in safe region. For states in unsafe region, the value of the operator is constant $1$. 
\end{enumerate}

Based on operator $\mathbb{T}$, we define a sequence of value functions  
$\{{\mathbf{v}}_t\}_{t=0,1,\dots,T}$ by:
\begin{align} \label{eq:valuefunctionateachstage}
		\left\{
		\begin{array}{ll }
  \mathbf{v}_T=\mathbf{1}_{X\setminus S}\\
			 \mathbf{v}_t=\mathbb{T}(\mathbf{v}_{t+1}), \forall t=0,1,\dots, T-1
		\end{array}. 
		\right.   
\end{align}
Intuitively, $\mathbf{v}_t(x)$ characterizes the  probability of violating safety within horizon $T-t$ starting from state $x\in X$, 
which is formally stated as follows. 
\begin{mylem}[Lemma 3 in~\cite{abate2008probabilistic}] \label{lem:safetyprobabilitycomputation}
  For each state $x\in X$, it holds that $1-\mathbf{v}_0(x)=\mathsf{SA}_x(S)$.
\end{mylem}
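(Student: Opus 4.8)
The plan is to prove the stronger statement that for every $t \in \{0,1,\dots,T\}$ and every $x \in X$,
\[
1 - \mathbf{v}_t(x) = \mathsf{Pr}_{x}\bigl(\{\mathbf{x}_{t:T} \mid \forall s \in [t,T],\, x_s \in S\}\bigr),
\]
i.e., $\mathbf{v}_t(x)$ is the probability of violating safety over the remaining horizon $T-t$ when started from $x$ at time $t$ (by time-homogeneity of the kernel $\mathsf{K}$, this quantity depends only on $T-t$). The lemma is the special case $t=0$. I would proceed by backward induction on $t$.

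\emph{Base case.} For $t = T$, the trajectory is the single state $x_T = x$, and $\mathbf{v}_T = \mathbf{1}_{X\setminus S}$, so $1 - \mathbf{v}_T(x) = \mathbf{1}_S(x)$, which is exactly the probability that $x \in S$.

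\emph{Inductive step.} Assume the claim holds at time $t+1$. Fix $x \in X$ and condition on the first transition. If $x \notin S$, then safety is already violated at time $t$, so the remaining-horizon safe probability is $0$; correspondingly $\mathbf{v}_t(x) = \mathbb{T}(\mathbf{v}_{t+1})(x) = 1$ by the definition of $\mathbb{T}$, and the identity holds. If $x \in S$, then the trajectory is safe over $[t,T]$ iff $x \in S$ (true) and the continuation from $x_{t+1}$ is safe over $[t+1,T]$; by the Markov property the continuation, conditioned on $x_{t+1} = x'$, has safe probability $1 - \mathbf{v}_{t+1}(x')$ by the induction hypothesis. Integrating against $\mathsf{K}(\mathrm{d}x' \mid x)$ and using $\int_X \mathsf{K}(\mathrm{d}x'\mid x) = 1$ gives
\[
\mathsf{Pr}_{x}\bigl(\{\forall s \in [t,T],\, x_s \in S\}\bigr) = \int_X \bigl(1 - \mathbf{v}_{t+1}(x')\bigr)\,\mathsf{K}(\mathrm{d}x'\mid x) = 1 - \int_X \mathbf{v}_{t+1}(x')\,\mathsf{K}(\mathrm{d}x'\mid x),
\]
and the right-hand side equals $1 - \mathbb{T}(\mathbf{v}_{t+1})(x) = 1 - \mathbf{v}_t(x)$ since $x \in S$. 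This closes the induction.

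The main obstacle is the rigorous handling of the measure-theoretic bookkeeping in the inductive step: one must justify the disintegration of $\mathsf{Pr}_x$ over $\{x\}\times X_1 \times \cdots \times X_T$ into the first transition kernel $\mathsf{K}(\cdot\mid x)$ composed with the pushforward of $\mathsf{Pr}_{x'}$, invoke measurability of $x'\mapsto \mathbf{v}_{t+1}(x')$ (which follows inductively since each $\mathbb{T}(\mathbf{v})$ is measurable whenever $\mathbf{v}$ is, by the Borel-measurability of the kernel in its first argument), and apply Fubini/Tonelli to split the indicator of the safe event as a product over stages. Since the paper cites this as Lemma~3 of~\cite{abate2008probabilistic}, I would keep this part brief and refer to the standard construction of the trajectory measure via transition kernels.
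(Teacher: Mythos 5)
Your proof is correct and is precisely the standard argument: the paper itself gives no proof of this lemma, deferring entirely to the cited reference, and the backward induction you set up (strengthening to the claim that $1-\mathbf{v}_t(x)$ is the safe probability over the remaining horizon $T-t$, then using the Markov property and time-homogeneity of $\mathsf{K}$ to close the inductive step) is exactly how it is established there. Your handling of the two cases $x\notin S$ and $x\in S$ matches the structure of the operator $\mathbb{T}$, and your remarks on the measure-theoretic bookkeeping (measurability of $\mathbb{T}(\mathbf{v})$, disintegration of the trajectory measure) correctly identify the only details left to the standard construction.
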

Similarly, let $\mathbf{v}:X\to \mathbb{R}$ be a measurable function on $X$ 
and we define operator $\hat{\mathbb{T}}$ on  $\mathbf{v}$ by: for any $x\in X$,
\begin{equation}\label{eq:reach-avoid-operator}
\hat{\mathbb{T}}(\mathbf{v})(x):=\mathbf{1}_{G}(x) + \mathbf{1}_{S\setminus G}(x)  \int_{X} \mathbf{v}(x') \mathsf{K}(\mathrm{d}x'\mid x).  
\end{equation}
Similar to operator $\mathbb{T}$, one can also verify that operator $\hat{\mathbb{T}}$ is still monotone and linear  in region $S\setminus G$. Based on operator $\hat{\mathbb{T}}$, we  also define a sequence of value functions $\{\hat{\mathbf{v}}_t\}_{t=0,1,\dots,T}$  by: 
\begin{align} \label{eq:reachavoidvaluefunctionateachstage}
		\left\{
		\begin{array}{ll }
  \hat{\mathbf{v}}_T=\mathbf{1}_{G}\\
			 \hat{\mathbf{v}}_t=\hat{\mathbb{T}}(\hat{\mathbf{v}}_{t+1}), \forall t=0,1,\dots, T-1
		\end{array}. 
		\right.   
\end{align}
Intuitively, $\hat{\mathbf{v}}_t(x)$ characterizes the  probability of satisfying the reach-avoid specification  within horizon $T-t$ starting from state $x\in X$. Staring from the initial time instant, we have the following result.  
\begin{mylem}[Lemma 4 in~\cite{summers2010verification}] \label{lem:reachavoidprobabilitycomputation}
  For each state $x\in X$, it holds that $\hat{\mathbf{v}}_0(x)=\mathsf{RA}_x(G,S)$.
\end{mylem}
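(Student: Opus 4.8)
The plan is to mirror the backward-induction argument behind Lemma~\ref{lem:safetyprobabilitycomputation}, but tracking the reach-avoid event of \eqref{eq:reach-avoiddef} instead of the pure safety event. For $t\in\{0,1,\dots,T\}$ and $x\in X$, let $q_t(x)$ denote the probability that the trajectory $(x_t,x_{t+1},\dots,x_T)$ of the system initialized at $x_t=x$ satisfies the reach-avoid specification over the residual horizon, i.e.\ that there exists $s\in[t,T]$ with $x_s\in G$ and $x_{t'}\in S$ for all $t'\in[t,s]$; this is well defined because the kernel $\mathsf{K}$ is time-homogeneous. The statement to prove by downward induction on $t$ is $q_t=\hat{\mathbf{v}}_t$; specializing to $t=0$ then gives $\hat{\mathbf{v}}_0(x)=q_0(x)=\mathsf{RA}_x(G,S)$ directly from the definition.

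For the base case $t=T$, the only admissible index is $s=T$, and since $G\subseteq S$ the condition reduces to $x_T\in G$, so $q_T=\mathbf{1}_G=\hat{\mathbf{v}}_T$. For the inductive step, assume $q_{t+1}=\hat{\mathbf{v}}_{t+1}$ and fix a state $x$ at time $t$; I would split into the three exhaustive cases given by the partition of $X$ into $G$, $S\setminus G$, and $X\setminus S$. If $x\in G$, the specification is met immediately by taking $s=t$ (again using $G\subseteq S$), hence $q_t(x)=1=\hat{\mathbb{T}}(\hat{\mathbf{v}}_{t+1})(x)$, since $\mathbf{1}_G(x)=1$ and $\mathbf{1}_{S\setminus G}(x)=0$. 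If $x\in X\setminus S$, then $x\notin G$ and safety is already violated before any visit to $G$, so the event is empty and $q_t(x)=0=\hat{\mathbb{T}}(\hat{\mathbf{v}}_{t+1})(x)$. If $x\in S\setminus G$, the index $s=t$ is excluded, so the residual reach-avoid event from time $t$ \emph{coincides} with the residual reach-avoid event from time $t+1$ evaluated at the random successor $x_{t+1}$ (the constraint $x_t\in S$ is automatically satisfied); applying the Markov property of $\mathsf{Pr}_x$, the law of total probability, and the induction hypothesis yields $q_t(x)=\int_X q_{t+1}(x')\,\mathsf{K}(\mathrm{d}x'\mid x)=\int_X \hat{\mathbf{v}}_{t+1}(x')\,\mathsf{K}(\mathrm{d}x'\mid x)=\hat{\mathbb{T}}(\hat{\mathbf{v}}_{t+1})(x)$. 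Combining the three cases gives $q_t=\hat{\mathbb{T}}(\hat{\mathbf{v}}_{t+1})=\hat{\mathbf{v}}_t$, closing the induction.

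Two auxiliary points must be dispatched along the way. First, the integral defining $\hat{\mathbb{T}}(\hat{\mathbf{v}}_{t+1})$ is meaningful only if each $\hat{\mathbf{v}}_t$ is Borel measurable; this follows by the same induction, since $\hat{\mathbf{v}}_T=\mathbf{1}_G$ is measurable ($G$ Borel) and, because $\mathsf{K}$ is a transition kernel and $G,S$ are Borel, the map $x\mapsto \mathbf{1}_G(x)+\mathbf{1}_{S\setminus G}(x)\int_X \mathbf{v}(x')\,\mathsf{K}(\mathrm{d}x'\mid x)$ is measurable whenever $\mathbf{v}$ is bounded and measurable. Second, one should record (again inductively) that $0\le\hat{\mathbf{v}}_t\le 1$, so all quantities are genuine probabilities and the integrals are finite.

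I expect the main obstacle to be the rigorous justification of the one-step decomposition in the case $x\in S\setminus G$: one must invoke the Markov (disintegration) property of the induced trajectory measure $\mathsf{Pr}_x$ to write the probability of the residual reach-avoid event as the $\mathsf{K}$-expectation of the residual reach-avoid probability at time $t+1$, and argue cleanly that excluding $s=t$ makes the two events literally equal, with no under- or over-counting. This structural fact is precisely what the ``absorbing'' target term $\mathbf{1}_G$ and the masking term $\mathbf{1}_{S\setminus G}$ in the operator $\hat{\mathbb{T}}$ encode, so once it is stated carefully the remaining computations are routine.
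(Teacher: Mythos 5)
Your proof is correct: the case split over $G$, $S\setminus G$, and $X\setminus S$, the use of $G\subseteq S$ in the first two cases, and the one-step Markov decomposition in the third are exactly what is needed, and you rightly flag measurability of each $\hat{\mathbf{v}}_t$ as the only technical point to discharge. Note, however, that the paper offers no proof of this lemma at all—it is imported verbatim as Lemma~4 of the cited reference—so there is nothing in the text to compare against; your argument is essentially the standard first-hitting-time backward recursion used in that reference.
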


\section{Problem Formulation}\label{section:formulation}
In this work, we consider the probabilistic verifications of safety and reach-avoid specifications for discrete time stochastic systems. Specifically, given an autonomous stochastic dynamic system, 
our objective is to determine whether the safety (or reach-avoid) probability is higher than a required threshold. The problem is stated formally as follow.
\begin{myprob}\label{problem:verification}
    Given an autonomous discrete-time stochastic dynamical system $\mathbb{S}=(X,X_0,W,\mu,f,T)$, safe region $S \in \mathcal{B}(X)$, target region $G\in\mathcal{B}(X)$, and desired bounds $0\leq \epsilon_{sa},\epsilon_{ra} \leq 1$, determine whether $\forall x \in X_0,  \mathsf{SA}_{x}(S)\geq \epsilon_{sa} $ and $  \mathsf{RA}_{x}(G,S) \geq \epsilon_{ra}$.
\end{myprob}

According to Lemma~\ref{lem:safetyprobabilitycomputation}, 
in order to determine the lower bound for safety   probability, one needs to compute an \emph{upper bound} for the value function $\mathbf{v}_0$. 
On the other hand, 
according to Lemma~\ref{lem:reachavoidprobabilitycomputation}, in order to determine the lower bound for reach-avoid   probability, one needs to compute a \emph{lower bound} for the value function $\hat{\mathbf{v}}_0$. 
Therefore, although the dynamic programming operators $\mathbb{T}$ and $\hat{\mathbb{T}}$ have similar form, we need to approximate the DP operators in different directions.

\section{Revisit Existing Safety Barrier Certificates from Dynamic Programming Perspective}\label{section:revisit}
Explicitly computing the safe probability using dynamic programming is generally a computationally challenging task due to its high complexity.  To mitigate this scalability challenge and compute the safe probability more efficiently, 
for the purpose of safety verification, the notion of \emph{barrier certificates} is widely used in the literature. 
Barrier certificates provide a  \emph{lower bound} on the safe probability, which therefore enables efficient safety verification without requiring exhaustive computation of the exact solution. 

Most existing works on safety verification of stochastic dynamical systems construct barrier certificates based on the notion of $c$-martingales; see, e.g., \cite{steinhardt2012finite, santoyo2021barrier, jagtap2018temporal}. This idea traces back to \cite[Chap. 3]{kushner1967stochastic}. Here, we review the main existing results as follows.

\begin{mylem}\label{lem:martingalebarrier}
A function $\mathbf{v}:X\to \mathbb{R}$ is said to be a \emph{martingale-based safety barrier certificate} (MSBC) if following conditions hold for some $\alpha \geq 1$ and $0\leq \beta <1$:
\begin{enumerate}
    \item 
    $\mathbf{v}\geq \mathbf{1}_{X\setminus S}$;
    \item 
    $\forall x \in S$, $ \int_{X} \mathbf{v}(x') \mathsf{K}(\mathrm{d}x'\mid x) \leq \frac{\mathbf{v}(x)}{\alpha} + \beta$. 
\end{enumerate}
If $\mathbf{v}$ is a MBC, then for $\gamma=\alpha\beta-\alpha+1$, we have \begin{align}\label{eq:martingalebound}
1-\mathsf{SA}_{x}(S) \leq 
		\left\{
		\begin{array}{ll}
			\mathbf{v}(x)(1-\beta)^{T}+ 1-(1-\beta)^{T} &  \text{if } \gamma < 0\\
			\mathbf{v}(x)\alpha^{-T}+\left(\sum_{i=0}^{T-1}\alpha^{-i}\right) \beta      &  \text{if }  \gamma  \geq 0.
		\end{array}
		\right.   
\end{align} 
\end{mylem}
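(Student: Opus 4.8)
The statement asserts an upper bound on $1-\mathsf{SA}_x(S)$, which by Lemma~\ref{lem:safetyprobabilitycomputation} is exactly the value function $\mathbf{v}_0(x)$ obtained from the backward recursion \eqref{eq:valuefunctionateachstage}. So the plan is to dominate $\mathbf{v}_0$ by exhibiting, for every stage $t$, an upper bound that is \emph{affine in} $\mathbf{v}$, namely $g_t := a_t\mathbf{v}+b_t$ with $a_t\ge 0$, such that $\mathbf{v}_t\le g_t$, and then to read off $g_0$. The proof is a backward induction on $t$ from $T$ down to $0$. The base case $t=T$ requires $g_T\ge \mathbf{v}_T=\mathbf{1}_{X\setminus S}$, which holds for $a_T=1,\,b_T=0$ since condition~1 gives $\mathbf{v}\ge\mathbf{1}_{X\setminus S}\ge 0$ everywhere. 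For the inductive step it suffices to have $\mathbb{T}(g_{t+1})\le g_t$, because then $\mathbf{v}_t=\mathbb{T}(\mathbf{v}_{t+1})\le\mathbb{T}(g_{t+1})\le g_t$ by monotonicity \eqref{eq:monoto} applied to the induction hypothesis $\mathbf{v}_{t+1}\le g_{t+1}$.

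To analyze the step I would split on membership in $S$. For $x\in S$, linearity in the safe region \eqref{eq:transilationinvariance} gives $\mathbb{T}(g_{t+1})(x)=a_{t+1}\int_X \mathbf{v}(x')\,\mathsf{K}(\mathrm{d}x'\mid x)+b_{t+1}$, and multiplying condition~2 by $a_{t+1}\ge 0$ bounds this above by $(a_{t+1}/\alpha)\mathbf{v}(x)+a_{t+1}\beta+b_{t+1}$; since $\mathbf{v}(x)\ge 0$, the inequality $\mathbb{T}(g_{t+1})(x)\le g_t(x)$ on $S$ is guaranteed once $a_t\ge a_{t+1}/\alpha$ and $b_t\ge a_{t+1}\beta+b_{t+1}$. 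For $x\notin S$, $\mathbb{T}(g_{t+1})(x)=1$ by definition of $\mathbb{T}$, and because $\mathbf{v}(x)\ge 1$ there (condition~1) with $a_t\ge 0$ we get $g_t(x)\ge a_t+b_t$, so $\mathbb{T}(g_{t+1})(x)\le g_t(x)$ off $S$ is guaranteed once $a_t+b_t\ge 1$. Hence any nonnegative coefficient sequence satisfying these three inequalities together with $a_T=1,\,b_T=0$ yields $\mathbf{v}_0\le a_0\mathbf{v}+b_0$.

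It then remains to choose the sequence, and here the sign of $\gamma=\alpha\beta-\alpha+1$ produces precisely the two branches. If $\gamma\ge 0$, take the "safe-region-tight" choice $a_{T-k}=\alpha^{-k}$ and $b_{T-k}=\beta\sum_{i=0}^{k-1}\alpha^{-i}$, which makes the two $S$-inequalities equalities; the remaining requirement $a_{T-k}+b_{T-k}\ge 1$ reduces, by summing the geometric series, to $\beta\ge 1-\tfrac1\alpha$, i.e. $\gamma\ge 0$, and setting $t=0$ (so $k=T$) yields $\mathbf{v}_0(x)\le\alpha^{-T}\mathbf{v}(x)+\beta\sum_{i=0}^{T-1}\alpha^{-i}$, the second bound. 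If $\gamma<0$, take instead the "unsafe-region-tight" choice $a_{T-k}=(1-\beta)^k$ and $b_{T-k}=1-(1-\beta)^k$, so that $a_t+b_t\equiv 1$ and the off-$S$ requirement is automatic; the $b$-inequality holds with equality, while the $a$-inequality $(1-\beta)^k\ge(1-\beta)^{k-1}/\alpha$ reduces to $\alpha(1-\beta)\ge 1$, i.e. $\gamma\le 0$, which holds, and $t=0$ yields $\mathbf{v}_0(x)\le(1-\beta)^T\mathbf{v}(x)+1-(1-\beta)^T$, the first bound. (As a consistency check, at $\gamma=0$ one has $1-\beta=1/\alpha$ and the two closed forms coincide, so the case split is well posed.)

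The main obstacle is conceptual rather than computational: one must realize that no single affine-in-$\mathbf{v}$ dominator works across all parameters — the $S$-constraint forces the coefficients down while the complement-of-$S$ constraint forces $a_t+b_t$ up, and these can be simultaneously met only by keeping the $S$-constraint tight when $\gamma\ge 0$ and the complement-of-$S$ constraint tight when $\gamma\le 0$, with the crossover exactly at $\gamma=0$. Once this is identified, the remaining verification is just evaluating geometric sums. (An alternative, more classical route would build an explicit supermartingale from $\mathbf{v}$ along the trajectory stopped at the exit time of $S$ and invoke a Ville-type maximal inequality; but the dynamic-programming argument above is cleaner and matches the perspective this section develops.)
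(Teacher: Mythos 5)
Your proof is correct and follows essentially the same route as the paper: a backward induction showing $\mathbf{v}_t$ is dominated by an affine function of $\mathbf{v}$, using monotonicity of $\mathbb{T}$, its linearity on $S$, the supermartingale condition on $S$, and $\mathbf{v}\ge 1$ off $S$. The only packaging difference is that you instantiate a second coefficient sequence for $\gamma<0$ directly, whereas the paper handles that branch by reparametrizing $\alpha'=(1-\beta)^{-1}$ and reusing the $\gamma\ge 0$ argument — which yields exactly your coefficients $a_{T-k}=(1-\beta)^k$, $b_{T-k}=1-(1-\beta)^k$.
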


To prove Eq.~\eqref{eq:martingalebound}, existing works adopt a martingale-based approach by showing that the stochastic process $W(x,n)=\alpha^n\mathbf{v}(x)+\alpha\beta(\sum_{i=n}^{T-1}\alpha^i)$ is a super-martingale, i.e., $\mathbb{E}_{x_n}[W(x_{n+1},n+1)]\leq W(x_n,n)$. 
Here, we provide an alternative proof for Lemma~\ref{lem:martingalebarrier}, which connects the martingale-based barrier certificate with the dynamic programming operator. 
The key is to show that the martingale-based barrier certificate, along with its parameters, provides an \emph{upper bound} for the unsafe probability  computed by the dynamic programming operator at each stage.  

Without loss of generality, our proof considers the case where 
$\gamma = \alpha \beta - \alpha + 1 \geq 0$. 
We will later discuss how the case of $\gamma < 0$ can be addressed accordingly. 

\begin{mythm}\label{thm:deriveexisting}
Let $\mathbf{v}$ be a martingale-based barrier certificate  defined in Lemma~\ref{lem:martingalebarrier}
with parameters $\alpha,\beta$  and $\gamma=\alpha \beta -\alpha +1\geq 0$,  and $\{\mathbf{v}_t\}_{t=0,1,\dots,T}$ be value functions defined in Eq.~\eqref{eq:valuefunctionateachstage}. Then, for any $t=0,1,\dots,T$, we have
    \begin{equation}\label{eq:thminductioneq}
  \mathbf{v}_t\leq \alpha^{t-T}\mathbf{v} + (\sum_{i=0}^{T-t-1} \alpha^{-i})\beta,
\end{equation}
which means that $1-\mathsf{SA}_{x}(S) \leq  \mathbf{v}(x)\alpha^{-T}+\left(\sum_{i=0}^{T-1}\alpha^{-i}\right) \beta$. 
\end{mythm}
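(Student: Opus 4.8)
The plan is to establish the bound \eqref{eq:thminductioneq} by backward induction on $t$, running from $t=T$ down to $t=0$, and then to read off the safety statement via Lemma~\ref{lem:safetyprobabilitycomputation}. To keep the algebra transparent I would abbreviate $\rho_t := \alpha^{t-T}$ and $\eta_t := \bigl(\sum_{i=0}^{T-t-1}\alpha^{-i}\bigr)\beta$, so that the target inequality reads $\mathbf{v}_t \le \rho_t\mathbf{v} + \eta_t$. The base case $t=T$ is immediate: the sum defining $\eta_T$ is empty, so the claim is $\mathbf{v}_T = \mathbf{1}_{X\setminus S} \le \mathbf{v}$, which is exactly condition (1) of Lemma~\ref{lem:martingalebarrier}.

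For the inductive step, assume $\mathbf{v}_{t+1} \le \rho_{t+1}\mathbf{v} + \eta_{t+1}$. Since $\rho_{t+1}>0$, this is a genuine pointwise comparison of measurable functions, so applying $\mathbb{T}$ and using monotonicity \eqref{eq:monoto} gives $\mathbf{v}_t = \mathbb{T}(\mathbf{v}_{t+1}) \le \mathbb{T}(\rho_{t+1}\mathbf{v}+\eta_{t+1})$, and it remains to bound the right-hand side pointwise. I would split on whether $x\in S$. On the safe region, linearity in the safe region \eqref{eq:transilationinvariance} rewrites $\mathbb{T}(\rho_{t+1}\mathbf{v}+\eta_{t+1})(x)$ as $\rho_{t+1}\int_X \mathbf{v}(x')\,\mathsf{K}(\mathrm{d}x'\mid x) + \eta_{t+1}$; substituting the $c$-martingale inequality (condition (2), $\int_X \mathbf{v}(x')\,\mathsf{K}(\mathrm{d}x'\mid x) \le \mathbf{v}(x)/\alpha + \beta$) and then checking the two elementary identities $\rho_{t+1}/\alpha = \rho_t$ and $\rho_{t+1}\beta + \eta_{t+1} = \eta_t$ — the latter being just a re-indexing of the geometric sum — produces exactly $\rho_t\mathbf{v}(x) + \eta_t$.

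On the unsafe region $X\setminus S$, however, $\mathbb{T}(\rho_{t+1}\mathbf{v}+\eta_{t+1})(x)$ equals the constant $1$ by definition of $\mathbb{T}$, so I instead need to verify $1 \le \rho_t\mathbf{v}(x) + \eta_t$ there. Using $\mathbf{v}(x)\ge 1$ (condition (1) again), this reduces to the purely numerical inequality $\alpha^{-(T-t)} + \bigl(\sum_{i=0}^{T-t-1}\alpha^{-i}\bigr)\beta \ge 1$, and this is the one place where the standing hypothesis $\gamma = \alpha\beta - \alpha + 1 \ge 0$ is used: it is equivalent to $\beta \ge 1 - \alpha^{-1}$, which after summing the geometric series (handling $\alpha=1$ directly) yields $\bigl(\sum_{i=0}^{T-t-1}\alpha^{-i}\bigr)\beta \ge 1 - \alpha^{-(T-t)}$, closing the estimate. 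Combining the two cases completes the induction, and evaluating \eqref{eq:thminductioneq} at $t=0$ together with $1-\mathsf{SA}_x(S) = \mathbf{v}_0(x)$ gives the claimed probability bound.

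I expect the unsafe-region case to be the main obstacle — not because the computation is hard, but because it is the step that does \emph{not} go through on the recursion structure alone: one must bring in both $\mathbf{v}\ge 1$ and the sign condition $\gamma\ge 0$, and it is precisely here that one sees which parts of the MSBC conditions are actually doing work over the unsafe states. The safe-region case, by contrast, is a routine substitution of the $c$-martingale bound followed by geometric-sum bookkeeping. Finally, I would note that the case $\gamma<0$ promised in the statement can be treated by the same induction scheme with $(1-\beta)^{T-t}$ in place of $\alpha^{t-T}$, which I would defer as indicated in the surrounding discussion.
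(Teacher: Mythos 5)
Your proof is correct and follows essentially the same route as the paper: backward induction on $t$ with the base case from condition (1), and the safe-region step via monotonicity, linearity of $\mathbb{T}$ on $S$, and the $c$-martingale inequality. The only (harmless) deviation is in the unsafe-region case, where you verify $1 \le \alpha^{t-T}\mathbf{v}(x) + \bigl(\sum_{i=0}^{T-t-1}\alpha^{-i}\bigr)\beta$ directly from $\mathbf{v}\ge 1$ and a telescoping geometric sum, whereas the paper chains $1\le \alpha^{-1}+\beta = \alpha^{-1}\mathbf{v}_{t+1}(x)+\beta$ through the induction hypothesis; both reduce to the same inequality $\gamma\ge 0 \Leftrightarrow \alpha^{-1}+\beta\ge 1$.
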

\begin{proof}
We prove by backward induction from $t=T$. 

\textbf{Induction Basis: }
When $t=T$, according to condition (1) in Lemma~\ref{lem:martingalebarrier}, we have 
\[
\mathbf{v}_T=\mathbf{1}_{X\setminus S}\leq \mathbf{v} = \alpha^{0} \mathbf{v} + (\sum_{i=0}^{-1} \alpha^{-i})\beta.
\]
Therefore, Eq.~\eqref{eq:thminductioneq} holds for $t=T$. 

\textbf{Induction Step: }
Now, let us assume that Eq.~\eqref{eq:thminductioneq} holds for  $t = n + 1 \leq T$, and we aim to prove that Eq.~\eqref{eq:thminductioneq} also holds for $ t = n $. We proceed by considering two distinct cases, depending on whether the state lies within the safe region or not.

Case 1: $x\in S$. \quad
For this case, the correctness of the induction is ensured by  condition (2) in Lemma~\ref{lem:martingalebarrier}, which provides a bound on expectation increasing rate of $\mathbf{v}$ over safe region $S$. 
Formally, for $x \in S$, we have
\begin{align*}
&\mathbf{v}_n(x)=\mathbb{T}(\mathbf{v}_{n+1})(x) \nonumber \\
\overset{(a)}{\leq}& \mathbb{T}\left( \alpha^{n+1-T}\mathbf{v} + (\sum_{i=0}^{T-n-2} \alpha^{-i})\beta \right)(x) \nonumber \\
\overset{(b)}{=} &\alpha^{n+1-T} \mathbb{T}(\mathbf{v})(x) + (\sum_{i=0}^{T-n-2} \alpha^{-i})\beta \nonumber\\
\overset{(c)}{\leq} &\alpha^{n+1-T} \left(\frac{\mathbf{v}(x)}{\alpha}+\beta\right) +(\sum_{i=0}^{T-n-2} \alpha^{-i})\beta \nonumber \\
=&\alpha^{n-T}\mathbf{v}(x)+\alpha^{n+1-T}\beta+(\sum_{i=0}^{T-n-2} \alpha^{-i})\beta \nonumber \\
= & \alpha^{n-T}\mathbf{v}(x)+ (\sum_{i=0}^{T-n-1} \alpha^{-i})\beta. \nonumber\label{eq:safetyregioninduction}
\end{align*}
In the above derivations,  
inequality (a) holds due to the induction hypothesis for $t=n+1\leq T$ and the monotonicity of operator $\mathbb{T}$. 
Equality  (b) comes from the linearity of operator $\mathbb{T}$ in safe region. 
Finally, inequality (c) comes from the fact  that 
$ \mathbb{T}(\mathbf{v})(x)=\int_{X} \mathbf{v}(x') \mathsf{K}(\mathrm{d}x'\mid x) $ 
for $x\in S$ and the barrier condition (2) in Lemma~\ref{lem:martingalebarrier}.

Case 2: $x\in X\setminus S$. \quad
For this case, the correctness of the induction essentially comes from the requirement of $\gamma\geq 0$. 
To see this, for $x\in X\setminus S$, we have
\begin{align}
  &\mathbb{T}(\mathbf{v}_{n+1})(x) =   1  \nonumber \\
  \overset{(a)}{\leq}  & \alpha^{-1} + \beta  \nonumber \\
  =  &\alpha^{-1} \mathbf{v}_{n+1}(x) + \beta  \nonumber \\
   \overset{(b)}{\leq} 
 &\alpha^{-1}\left( \alpha^{n+1-T}\mathbf{v}(x) + (\sum_{i=0}^{T-n-2} \alpha^{-i})\beta \right)+\beta \nonumber \\
 = 
 &\alpha^{n-T}\mathbf{v}(x)+ (\sum_{i=0}^{T-n-1} \alpha^{-i})\beta. \nonumber
\end{align}
In the above derivations,  inequality (a) 
comes from the conditions $\gamma=\alpha\beta-\alpha+1 \geq 0$ and $\alpha>0$. 
Inequality (b) comes from the induction hypothesis on $t=n+1$.  
This completes the proof.
\end{proof}

Note that Theorem~\ref{thm:deriveexisting} only considers the case of $\gamma \geq 0$, as the case of $\gamma < 0$ can be equivalently transformed into the former one without loss of generality. 
Specifically, let us assume that  there is an MSBC $\mathbf{v}$  with $\gamma=\alpha \beta - \alpha +1< 0$. 
Then the function $\mathbf{v}$ is also an MSBC  for new parameters
$\alpha'=(1-\beta)^{-1}$ and $\beta'=\beta$ such that $\alpha'\beta'-\alpha'+1=0$. 
To see this, for any $x \in S$, we have 
\[
\frac{\mathbf{v}(x)}{\alpha'}+\beta'= \mathbf{v}(x)(1-\beta) + \beta > \frac{\mathbf{v}(x)}{\alpha}+\beta, 
\]
where the inequality comes from the condition $\gamma=\alpha\beta - \alpha +1<0$. 
Therefore, the barrier  condition (2) in Lemma~\ref{lem:martingalebarrier} still holds 
when considering new parameters $\alpha'$ and $\beta'$. 
Furthermore,  the probability bound induced by the MSBC $\mathbf{v}$ with parameters $\alpha$ and $\beta$ is
\[
\mathbf{v}(x)(1-\beta)^T+\left(\sum_{i=0}^{T-1}(1-\beta)^i\right) \beta,
\]
which is same as the probability bound induced by the MSBC $\mathbf{v}$ with parameters $\alpha'$ and $\beta'$. 
In other words, the existence of parameters satisfying $\gamma < 0$ always implies the existence of parameters satisfying $\gamma \geq 0$. 
\begin{remark}\label{remark:SBC}
More recently, a different type of safety barrier certificate for stochastic dynamical system is proposed based on an augmented switch system~\cite{xue2024finite}.
Specifically,  a function $\mathbf{v}:X\to \mathbb{R}$ is said to be a 
\emph{switch-system-based safety barrier certificate} (SSBC) if following conditions hold for some $0< \alpha \leq 1$ and $\beta \in \mathbb{R}$:
\begin{enumerate}
        \item $\mathbf{v}\geq \mathbf{1}_{X\setminus S}$;
    \item $\forall x \in S$, $ \int_{X} \mathbf{v}(x') \mathsf{K}(\mathrm{d}x'\mid x) \leq \frac{\mathbf{v}(x)}{\alpha} + \beta$.
\end{enumerate}
If $\mathbf{v}$ is a SSBC, then for $\gamma=\alpha\beta-\alpha+1$, we have
\begin{align}\label{eq:switchbound}
1-\mathsf{SA}_{x}(S) \leq 
		\left\{
		\begin{array}{ll}
			\mathbf{v}(x)(1-\beta)^T+ 1-(1-\beta)^T &  \text{if } \gamma < 0\\
			\mathbf{v}(x)\alpha^{-T}+\left(\sum_{i=0}^{T-1}\alpha^{-i}\right) \beta      &  \text{if }  \gamma  \geq 0.
		\end{array}
		\right.   \nonumber
\end{align} 
The newly proposed SSBC provides a similar probability bound as the MSBC. 
The main difference is that they require a different set of conditions for  parameters $\alpha,\beta \in \mathbb{R}$. 
Our proof for MSBC, from the dynamic programming 
perspective, can also be applied to prove the correctness of SSBC in the same manner. 
\end{remark}

\section{Construction of Barrier Certificates Inspired by Dynamic Programming}\label{sec:dpsafetybound}
Based on the  above discussed new insights between existing barrier certificates and dynamic programming techniques, in this section, we propose a new set of barrier certificates conditions relaxing the existing ones for safety specification. 
Furthermore, we extend our approach to the case of reach-avoid specification by developing a new  barrier certificate for the verification of reach-avoid probability. 

\subsection{Relaxed Barrier Certificates for  Safety Verification}
Recall that, in the proof of Theorem~\ref{thm:deriveexisting}, condition $\gamma\geq 0$ is used in Case 2 $x\in X\setminus S$ 
as a \emph{sufficient condition} to ensure the induction. 
Specifically, it ensures that, for each induction step $t=0,1,\dots,T$, 
\begin{equation}\label{eq:newcondition}
 1\leq \alpha^{t-T}\mathbf{v}(x) + (\sum_{i=0}^{T-t-1} \alpha^{-i})\beta.
\end{equation}
Note that the above equation essentially introduces a new constraint on $\mathbf{v}$, $\alpha$, and $\beta$. 
In other words, if this constraint is enforced during the search for barrier certificates, Theorem~1 remains valid even without the condition $\gamma \geq 0$. 
However, Eq.~\eqref{eq:newcondition} imposes $T+1$ constraints, as it must hold at each induction step. 
It increases computational complexity, particularly when the time horizon $T$ is large.

However, by further analyzing the constraints in Eq.~\eqref{eq:newcondition}, we observe that the value on the right-hand side (RHS) is actually monotonic with respect to $t$ for a fixed $x$. This implies that only two constraints, at the terminal points $t = 0$ and $t = T$, are necessary, rather than enforcing the constraint at every intermediate step.
To see this more clearly, 
let $\eta_t^x$ be the RHS of Eq.~\eqref{eq:newcondition}. 
Then we have
\[
\eta_{t}^x-\eta_{t+1}^x=\alpha^{t-T}(\mathbf{v}(x)-\alpha\mathbf{v}(x)+\alpha\beta).
\]
That is, for each $x\in X\setminus S$,  the value of $\eta_t^x$ is either increasing or decreasing w.r.t. $t$. 
Therefore, to ensure  constraint \eqref{eq:newcondition} true for each time instant,
it suffices to only ensure that  
$\eta^x_T \geq 1$ and $\eta^x_0 \geq 1$ for all $x \in X\setminus S$. 
Furthermore,  
condition (1) in Lemma~\ref{lem:martingalebarrier} has already ensured that $\eta_{T}^x \geq 1,\forall x\in X\setminus S$. 
In summary,  we can improve the existing barrier certificates by dropping  condition $\gamma \geq 0$ and adding a relaxed condition $\eta_{0}^x \geq 1,\forall x \in X \setminus S$. 

Based on the above discussion, we propose a new type of \emph{dynamic-programming-based safety barrier certificates} (DSBC) stated formally as follows.  
\begin{resp}
\begin{mythm}\label{thm:dpbarrier}
    A function $\mathbf{v}:X\to \mathbb{R}$ is said to be a \emph{dynamic programming based safety barrier certificate} (DSBC) if following conditions hold for some $\alpha > 0$ and $\beta \in \mathbb{R}$:
\begin{enumerate}
        \item $\mathbf{v}\geq \mathbf{1}_{X\setminus S}$;
    \item $\forall x \in S$, $ \int_{X} \mathbf{v}(x') \mathsf{K}(\mathrm{d}x'\mid x) \leq \frac{\mathbf{v}(x)}{\alpha} + \beta$;
    \item $\forall x\in X\setminus S, \mathbf{v}(x)\alpha^{-T}+\left(\sum_{i=0}^{T-1}\alpha^{-i}\right) \beta \geq 1$.
\end{enumerate}
If $\mathbf{v}$ is a DSBC, then it holds that \begin{align}\label{eq:dpbound}
1-\mathsf{SA}_{x}(S) \leq \mathbf{v}(x)\alpha^{-T}+\left(\sum_{i=0}^{T-1}\alpha^{-i}\right) \beta.
\end{align} 
\end{mythm}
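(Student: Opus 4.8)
The plan is to reuse the backward-induction argument from the proof of Theorem~\ref{thm:deriveexisting} essentially verbatim, with conditions~(1) and~(3) together playing the role that $\gamma \geq 0$ played there. Concretely, write $g_t(x) := \alpha^{t-T}\mathbf{v}(x) + \left(\sum_{i=0}^{T-t-1}\alpha^{-i}\right)\beta$ for $t = 0,1,\dots,T$; then the target bound Eq.~\eqref{eq:dpbound} is exactly $\mathbf{v}_0 \leq g_0$ combined with Lemma~\ref{lem:safetyprobabilitycomputation}. So it suffices to show $\mathbf{v}_t \leq g_t$ for every $t$, which I would prove by backward induction from $t = T$.

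Before the induction, I would first record the key preliminary fact, namely that $g_t(x) \geq 1$ for every $t \in \{0,1,\dots,T\}$ and every $x \in X \setminus S$. This is the whole point of replacing the $T+1$ pointwise constraints of Eq.~\eqref{eq:newcondition} by the two endpoint conditions: for fixed $x$ the telescoping difference $g_t(x) - g_{t+1}(x) = \alpha^{t-T}\bigl((1-\alpha)\mathbf{v}(x) + \alpha\beta\bigr)$ keeps a constant sign in $t$ because $\alpha^{t-T} > 0$, so $t \mapsto g_t(x)$ is monotone and attains its minimum at $t = 0$ or $t = T$; condition~(1) gives $g_T(x) = \mathbf{v}(x) \geq 1$ on $X \setminus S$ (using the empty-sum convention), and condition~(3) is literally $g_0(x) \geq 1$, so the minimum is $\geq 1$. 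With this in hand the induction is straightforward: the base case $t = T$ is condition~(1) again; in the inductive step, for $x \in S$ one runs the same monotonicity / linearity-in-$S$ / condition-(2) chain as in Theorem~\ref{thm:deriveexisting} to get $\mathbf{v}_n(x) = \mathbb{T}(\mathbf{v}_{n+1})(x) \leq \mathbb{T}(g_{n+1})(x) \leq g_n(x)$, while for $x \in X\setminus S$ one simply uses $\mathbf{v}_n(x) = \mathbb{T}(\mathbf{v}_{n+1})(x) = 1 \leq g_n(x)$ by the preliminary fact — and here, unlike in Theorem~\ref{thm:deriveexisting}, the induction hypothesis is not even needed. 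Evaluating at $t = 0$ and invoking Lemma~\ref{lem:safetyprobabilitycomputation} then gives Eq.~\eqref{eq:dpbound}.

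I do not expect a genuine obstacle, since this is a controlled relaxation of an argument already carried out in full. The one step demanding care is the monotonicity-in-$t$ claim for $g_t(x)$: the validity of the weaker conditions hinges entirely on the assertion that verifying $g_t(x) \geq 1$ only at $t = 0$ and $t = T$ suffices, so one must make explicit that $\alpha > 0$ forces every factor $\alpha^{t-T}$ to be strictly positive, hence that the sign of $g_t(x) - g_{t+1}(x)$ is genuinely independent of $t$. A minor bookkeeping point, already implicit in the statement, is the convention $\sum_{i=0}^{-1}\alpha^{-i} = 0$, which is what makes the $t = T$ instances of both the preliminary fact and the induction base come out correctly.
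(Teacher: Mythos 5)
Your proposal is correct and follows essentially the same route as the paper: the same monotonicity-in-$t$ observation on $\eta_t^x = \alpha^{t-T}\mathbf{v}(x) + (\sum_{i=0}^{T-t-1}\alpha^{-i})\beta$ reducing the $T+1$ constraints to the two endpoints handled by conditions (1) and (3), followed by the same backward induction reused from Theorem~\ref{thm:deriveexisting}. Your side remark that the induction hypothesis is not even needed in the unsafe-region case is accurate and a nice clarification, but it does not change the argument.
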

\end{resp}
\begin{proof}
Let $\eta_t^x=\alpha^{t-T}\mathbf{v}(x) + (\sum_{i=0}^{T-t-1} \alpha^{-i})\beta$. 
For each $x \in X\setminus S$ and $t=0,1,\dots,T$, since
\[
\eta_{t}^x-\eta_{n+1}^x=\alpha^{t-T}(\mathbf{v}(x)-\alpha\mathbf{v}(x)+\alpha\beta),
\]
we know that if $\mathbf{v}(x)-\alpha\mathbf{v}(x)+\alpha\beta \geq 0$, then $\eta_0^x\geq \eta_1^x \dots \geq \eta_T^x$; otherwise $\eta_0^x\leq \eta_1^x \dots \leq \eta_T^x$. 
Therefore,  for $x \in X \setminus S$ and $t=0,1,\dots,T$, we have
\begin{align}\label{eq:improvelarger}
    \eta_t^x \geq& \min\left\{ \eta_0^x=\alpha^{-T}\mathbf{v}(x)+\left(\sum_{i=0}^{T-1}\alpha^{-i}\right) \beta,\eta_T^x=\mathbf{v}(x) \right\}\nonumber\\
    \geq& 1
\end{align}
 where the last inequality comes from conditions 1) and 3). 

Similar to the proof of Theorem~\ref{thm:deriveexisting}, we can show by induction  that 
\begin{equation}\label{eq:improveinduction}
    \mathbf{v}_t\leq \alpha^{t-T}\mathbf{v} + (\sum_{i=0}^{T-t-1} \alpha^{-i})\beta, \quad \forall t=0,1,\dots,T, 
\end{equation}
where $\{\mathbf{v}_t\}_{t=0,1,\dots,T}$ are value functions defined in Eq.~\eqref{eq:valuefunctionateachstage}.
Specifically, 
condition (1) ensures that Eq.~\eqref{eq:improveinduction} holds for $t=T$; 
condition (2) and condition (3) ensure that the induction step holds for $x\in S$ and for $x\in X\setminus S$, respectively.  
\end{proof}

\subsection{Barrier Certificate for Reach-Avoid Verification}
Our approach for safety verification can be further extended to reach-avoid verification.
Compared to the extensive literature on safety verification using barrier certificates, works employing barrier certificates for finite-horizon reach-avoid verification are much rarer. 
The main challenge lies in the difference between the two problems. 
For safety verification, one can adopt the notion of super-martingales to construct an \emph{upper bound} on the value function, which represents the unsafe probability. However, this approach is unsuitable for reach-avoid specifications, as they require deriving a \emph{lower bound} on the corresponding value function.

Nevertheless, from the dynamic programming perspective, reach-avoid barrier certificates can still be constructed similarly to those in Theorem~\ref{thm:dpbarrier}. The result is formally stated in Theorem~\ref{thm:dpreach-avoidbarrier}.  Specifically, condition 1) ensures the correctness of the induction basis, while conditions 2), 3), and 4) guarantee that the induction step holds for states in  $S\setminus G$, $X\setminus S$, and $G$, respectively.  
\begin{resp}
\begin{mythm}\label{thm:dpreach-avoidbarrier}
    A function $\hat{\mathbf{v}}:X\to \mathbb{R}$ is said to be a \emph{reach-avoid barrier certificate} (RABC) if following conditions hold for some $\alpha > 0$ and $\beta \in \mathbb{R}$:
\begin{enumerate}
        \item $\hat{\mathbf{v}}\leq \mathbf{1}_{G}$;
    \item $\forall x \in S\setminus G$, $ \int_{X} \hat{\mathbf{v}}(x') \mathsf{K}(\mathrm{d}x'\mid x) \geq \frac{\hat{\mathbf{v}}(x)}{\alpha} + \beta$;
    \item $\forall x\in X\setminus S, \hat{\mathbf{v}}(x)\alpha^{-T}+\left(\sum_{i=0}^{T-1}\alpha^{-i}\right) \beta \leq 0$;
    \item $\forall x \in G, \hat{\mathbf{v}}(x)\alpha^{-T}+\left(\sum_{i=0}^{T-1}\alpha^{-i}\right) \beta \leq 1$.
\end{enumerate}
If $\hat{\mathbf{v}}$ is a RABC, then it holds that \begin{align}\label{eq:DPRAbound}
\mathsf{RA}_{x}(S,G) \geq \hat{\mathbf{v}}(x)\alpha^{-T}+\left(\sum_{i=0}^{T-1}\alpha^{-i}\right) \beta.
\end{align} 
\end{mythm}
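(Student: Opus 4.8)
The plan is to mirror the proof of Theorem~\ref{thm:dpbarrier}, but working with the reach-avoid operator $\hat{\mathbb{T}}$ and the value functions $\{\hat{\mathbf{v}}_t\}$ of Eq.~\eqref{eq:reachavoidvaluefunctionateachstage}, and establishing a \emph{lower} bound rather than an upper bound. Concretely, I would prove by backward induction on $t=T,T-1,\dots,0$ that
\[
\hat{\mathbf{v}}_t \;\geq\; \alpha^{t-T}\hat{\mathbf{v}} + \Big(\textstyle\sum_{i=0}^{T-t-1}\alpha^{-i}\Big)\beta \;=:\; \eta_t,
\]
where $\eta_t^x$ denotes the right-hand side evaluated at $x$. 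Once this is shown, taking $t=0$ and invoking Lemma~\ref{lem:reachavoidprobabilitycomputation} (which gives $\hat{\mathbf{v}}_0(x)=\mathsf{RA}_x(G,S)$) yields Eq.~\eqref{eq:DPRAbound}.

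For the induction basis $t=T$ we have $\hat{\mathbf{v}}_T=\mathbf{1}_G$ and $\eta_T^x=\hat{\mathbf{v}}(x)$, so the claim is precisely condition 1). Before the induction step I would record the same monotonicity-in-$t$ observation used in Theorem~\ref{thm:dpbarrier}: since $\eta_t^x-\eta_{t+1}^x=\alpha^{t-T}\big(\hat{\mathbf{v}}(x)-\alpha\hat{\mathbf{v}}(x)+\alpha\beta\big)$ and $\alpha>0$, for each fixed $x$ the sequence $(\eta_t^x)_t$ is monotone in $t$, hence $\eta_t^x\le\max\{\eta_0^x,\eta_T^x\}=\max\{\hat{\mathbf{v}}(x)\alpha^{-T}+(\sum_{i=0}^{T-1}\alpha^{-i})\beta,\ \hat{\mathbf{v}}(x)\}$ for every $t$. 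This is exactly why the boundary conditions 3) and 4) need only be imposed at $t=0$.

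For the induction step, assume $\hat{\mathbf{v}}_{n+1}\geq\eta_{n+1}$ and split on the location of $x$. If $x\in S\setminus G$, then $\hat{\mathbf{v}}_n(x)=\hat{\mathbb{T}}(\hat{\mathbf{v}}_{n+1})(x)\geq\hat{\mathbb{T}}(\eta_{n+1})(x)$ by monotonicity of $\hat{\mathbb{T}}$; using the linearity of $\hat{\mathbb{T}}$ on $S\setminus G$, the identity $\hat{\mathbb{T}}(\hat{\mathbf{v}})(x)=\int_X\hat{\mathbf{v}}(x')\mathsf{K}(\mathrm{d}x'\mid x)$ there, and condition 2), this is $\geq\alpha^{n+1-T}(\hat{\mathbf{v}}(x)/\alpha+\beta)+(\sum_{i=0}^{T-n-2}\alpha^{-i})\beta$, which rearranges to $\eta_n^x$ exactly as in Case~1 of Theorem~\ref{thm:deriveexisting} (the positivity of $\alpha^{n+1-T}$ preserves the inequality). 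If $x\in X\setminus S$, then, since $G\subseteq S$, $\hat{\mathbb{T}}(\hat{\mathbf{v}}_{n+1})(x)=0$, so I must verify $\eta_n^x\le0$; the max-of-endpoints bound together with condition 1) (giving $\hat{\mathbf{v}}(x)\le\mathbf{1}_G(x)=0$) and condition 3) forces $\max\{\eta_0^x,\eta_T^x\}\le0$. If $x\in G$, then $\hat{\mathbb{T}}(\hat{\mathbf{v}}_{n+1})(x)=1$, so I must verify $\eta_n^x\le1$; again the endpoint bound with condition 1) ($\hat{\mathbf{v}}(x)\le\mathbf{1}_G(x)=1$) and condition 4) gives $\max\{\eta_0^x,\eta_T^x\}\le1$. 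This closes the induction and hence the proof.

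The main obstacle is more bookkeeping than conceptual depth: because the target is now a lower bound on $\hat{\mathbf{v}}_0$, condition 2) is a reversed ($\geq$) martingale-type inequality, and the two non-safe cases must be handled by \emph{upper}-bounding $\eta_n^x$ via the maximum of its two endpoints — rather than lower-bounding it via the minimum, as was done in the safety case — so keeping every inequality pointed the right way is the delicate part. I also need to confirm that the monotonicity and linearity of $\hat{\mathbb{T}}$ (asserted in Section~\ref{sec:prelinimary}) apply to possibly sign-indefinite $\hat{\mathbf{v}}$, and to use that $\hat{\mathbb{T}}$ collapses to the constants $0$ on $X\setminus S$ and $1$ on $G$, which is precisely what makes conditions 3) and 4) the correct requirements on those regions.
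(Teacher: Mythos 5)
Your proposal is correct and follows essentially the same route as the paper's proof: backward induction on the bound $\hat{\mathbf{v}}_t \geq \alpha^{t-T}\hat{\mathbf{v}} + (\sum_{i=0}^{T-t-1}\alpha^{-i})\beta$, with the same three-case split on $S\setminus G$, $G$, and $X\setminus S$, and the same monotonicity-in-$t$ argument reducing the boundary requirements to the two endpoints $t=0$ and $t=T$ handled by conditions 1), 3), and 4). No gaps.
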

\end{resp}
\begin{proof}
Let $\eta_t^x=\alpha^{t-T}\hat{\mathbf{v}}(x) + (\sum_{i=0}^{T-t-1} \alpha^{-i})\beta$. 
For each $x \in X$ and $t=0,1,\dots,T$, since
\[
\eta_{t}^x-\eta_{n+1}^x=\alpha^{t-T}(\hat{\mathbf{v}}(x)-\alpha\hat{\mathbf{v}}(x)+\alpha\beta),
\]
we know that if $\hat{\mathbf{v}}(x)-\alpha\hat{\mathbf{v}}(x)+\alpha\beta \geq 0$, then $\eta_0^x\geq \eta_1^x \dots \geq \eta_T^x$; otherwise $\eta_0^x\leq \eta_1^x \dots \leq \eta_T^x$. 
Therefore,  for $x \in X \setminus S$ and $t=0,1,\dots,T$, we have
\begin{align}\label{eq:improvelargerunsafe}
    \eta_t^x \leq& \max\left\{ \eta_0^x=\alpha^{-T}\hat{\mathbf{v}}(x)+\left(\sum_{i=0}^{T-1}\alpha^{-i}\right) \beta,\eta_T^x=\hat{\mathbf{v}}(x) \right\}\nonumber\\
    \leq& 0
\end{align}
 where the last inequality comes from conditions 1) and 3). 
 
 Similarly, for $x \in G$ and $t=0,1,\dots,T$, we have
\begin{align}\label{eq:improvelargertarget}
    \eta_t^x \leq& \max\left\{ \eta_0^x=\alpha^{-T}\hat{\mathbf{v}}(x)+\left(\sum_{i=0}^{T-1}\alpha^{-i}\right) \beta,\eta_T^x=\hat{\mathbf{v}}(x) \right\}\nonumber\\
    \leq& 1
\end{align}
  where the last inequality comes from conditions 1) and 4).

Similar to the proof of Theorem~\ref{thm:deriveexisting}, we now show by backward induction that 
\begin{equation}\label{eq:improveinductionreachavoid}
    \hat{\mathbf{v}}_t\geq \alpha^{t-T}\hat{\mathbf{v}} + (\sum_{i=0}^{T-t-1} \alpha^{-i})\beta, \quad \forall t=0,1,\dots,T, 
\end{equation}
where $\{\hat{\mathbf{v}}_t\}_{t=0,1,\dots,T}$ are value functions defined in Eq.~\eqref{eq:reachavoidvaluefunctionateachstage}.

\textbf{Induction Basis: }
When $t=T$, from condition 1), we have 
\[
\hat{\mathbf{v}}_T=\mathbf{1}_{G}\geq \hat{\mathbf{v}} = \alpha^{0} \hat{\mathbf{v}} + (\sum_{i=0}^{-1} \alpha^{-i})\beta.
\] 

\textbf{Induction Step: }
Now, let us assume that Eq.~\eqref{eq:improveinductionreachavoid} holds for  $t = n + 1 \leq T$, and we aim to prove that Eq.~\eqref{eq:improveinductionreachavoid} also holds for $ t = n $. We consider three cases: (a) $x \in S\setminus G$, (b) $x \in G$, and (c) $x \in X\setminus S$.

Case 1: $x \in S\setminus G$.\quad For this case,the correctness of the induction is ensured by condition 2), which provides a bound on expectation decreasing rate of $\hat{\mathbf{v}}$ over region $S\setminus G$. 
Formally, for $x \in S\setminus G$,
\begin{align*}
&\hat{\mathbf{v}}_n(x)=\hat{\mathbb{T}}(\hat{\mathbf{v}}_{n+1})(x) \nonumber \\
\overset{(a)}{\geq}& \hat{\mathbb{T}}\left( \alpha^{n+1-T}\hat{\mathbf{v}} + (\sum_{i=0}^{T-n-2} \alpha^{-i})\beta \right)(x) \nonumber \\
\overset{(b)}{=} &\alpha^{n+1-T} \hat{\mathbb{T}}(\hat{\mathbf{v}})(x) + (\sum_{i=0}^{T-n-2} \alpha^{-i})\beta \nonumber\\
\overset{(c)}{\geq} &\alpha^{n+1-T} \left(\frac{\hat{\mathbf{v}}(x)}{\alpha}+\beta\right) +(\sum_{i=0}^{T-n-2} \alpha^{-i})\beta \nonumber \\
=&\alpha^{n-T}\hat{\mathbf{v}}(x)+\alpha^{n+1-T}\beta+(\sum_{i=0}^{T-n-2} \alpha^{-i})\beta \nonumber \\
= & \alpha^{n-T}\hat{\mathbf{v}}(x)+ (\sum_{i=0}^{T-n-1} \alpha^{-i})\beta. \nonumber\label{eq:safetyregioninduction}
\end{align*}
In the above derivations,  
inequality (a) holds due to the induction hypothesis for $t=n+1\leq T$ and the monotonicity of operator $\hat{\mathbb{T}}$. 
Equality  (b) comes from the linearity of operator $\hat{\mathbb{T}}$ in region $S\setminus G$. 
Finally, inequality (c) comes from the fact  that 
$ \hat{\mathbb{T}}(\hat{\mathbf{v}})(x)=\int_{X} \hat{\mathbf{v}}(x') \mathsf{K}(\mathrm{d}x'\mid x) $ 
for $x\in S\setminus G$ and the condition 2).

Case 2: $x \in G$. \quad For $x \in G$, $t=1,2,\dots,T$, we have
\[
\hat{\mathbf{v}}_t(x)=1\geq \alpha^{t-T}\hat{\mathbf{v}}(x) + (\sum_{i=0}^{T-t-1} \alpha^{-i})\beta
\]
where the inequality comes from \eqref{eq:improvelargertarget}.

Case 3: $x \in X\setminus S$. \quad For $x \in X\setminus S$, $t=1,2,\dots,T$, we have
\[
\hat{\mathbf{v}}_t(x)=0\geq \alpha^{t-T}\hat{\mathbf{v}}(x) + (\sum_{i=0}^{T-t-1} \alpha^{-i})\beta
\]
where the inequality comes from \eqref{eq:improvelargerunsafe}. This completes the proof.
\end{proof}

\begin{remark} 
Our dynamic programming approach to construct barrier certificates offers several potential advantages. First, it can be naturally extended to any specification whose satisfaction probability can be efficiently computed via dynamic programming. For example, one can use a set of conditions to bound the initial dynamic programming value function and each employment of the dynamic programming operator, as those in Theorem~\ref{thm:dpbarrier} and Theorem~\ref{thm:dpreach-avoidbarrier}. Second, beyond lower-bounding satisfaction probabilities, our method can also derive upper bounds for safety and reach-avoid probabilities in a similar framework. This dual-bound capability enables tighter estimation of the true satisfaction probability, which may further enhance verification precision.
\end{remark}

\section{Computation of Barrier Certificate by Sum of Square Programs}\label{subsection:SOSsomputation}
So far, we have proposed two barrier certificates: one for safety specification by removing the parameter condition of existing method and adding only single condition, the other for reach-avoid specification by a similar construction manner.
Now we show that the problems of searching the proposed barrier certificates can be solved by sum of square (SOS) programs~\cite{sturm1999using, prajna2002introducing}.

Given $ z \in \mathbb{R}^d $,  let $ \mathbb{R}[z] $ denote the set of all polynomials over $ z $. 
The set of SOS polynomials over $z$ is 
\begin{equation}\label{def:polynomial}
    \Sigma[z] := \left\{ s(z) \in \mathbb{R}[z] \mid s(z)= \sum_{i=1}^m g_i(z)^2, g_i(z) \in \mathbb{R}[z] \right\}. \nonumber
\end{equation}
In order to apply SOS optimization techniques, 
we need to make the following further  assumptions.
\begin{assumption}\label{assumption:polynomial}
The following assumptions hold for system $\mathbb{S}=(X,X_0,W,\mu,f,T)$, safety set $S \in \mathcal{B}(X)$, and target set $G \in \mathcal{B}(X)$: 
\begin{itemize}
    \item[A1] 
    The system dynamic $f$ is a polynomial over $X\times W$;
    \vspace{3pt}
    \item[A2] 
    Each set $\star \in \{X_0,S,X \setminus S\}$ is defined as the super-level set of polynomial $\mathbf{s}_{\star}$, 
    i.e., $\star=\{ x \in \mathbb{R}^n \mid \mathbf{s}_{\star}(x) \geq 0 \}$;
    \vspace{3pt}
    \item[A3] Each set $\star \in \{X_0, G, S\setminus G,X \setminus S,X\setminus G\}$ is defined as the super-level set of polynomial $\mathbf{s}_{\star}$.
\end{itemize}
\end{assumption}
The assumptions above are commonly adopted in the literature when searching for barrier certificates using SOS programs.
Under Assumption~\ref{assumption:polynomial}, we propose the SOS program~\eqref{eq:objSOS}-\eqref{eq:fifthSOS} to search for the DSBC. 
For each search, the value of $\alpha$ must be given in advance to ensure the linearity of the objective function.
The variables $\beta$ and $\delta$ are program variables, and $\Tilde{\mathbf{v}}$ represents the computed DSBC. 
Specifically, constraints \eqref{eq:firstSOS} and \eqref{eq:secondSOS} ensure that $\Tilde{\mathbf{v}} \geq \mathbf{1}_{X\setminus S}$, while constraint \eqref{eq:thirdSOS} guarantees that $\Tilde{\mathbf{v}}(x) \leq \delta$ for all $x \in X_0$. Furthermore, constraints \eqref{eq:forthSOS} and \eqref{eq:fifthSOS} ensure that conditions 2) and 3) in Theorem~2 are satisfied, respectively. Finally, the objective \eqref{eq:objSOS} minimizes the unsafe probability over the initial state set.
 \begin{tcolorboxenv}[t]
\begin{center}
        \tcbset{title=\quad SOS Program for Searching DSBC}
    \begin{tcolorbox} 
        \vspace{-15pt}
\begin{align}
 \min_{\delta, \beta \in \mathbb{R}} \quad \delta\alpha^{-T}+\left(\sum_{i=0}^{T-1}\alpha^{-i}\right)\beta &  \label{eq:objSOS} \\
 \Tilde{\mathbf{v}}(x)-\xi_S(x) \mathbf{s}_{S}(x) & \in \Sigma[x]  \label{eq:firstSOS} \\
 \Tilde{\mathbf{v}}(x)-\xi_{X \setminus S}(x) \mathbf{s}_{X \setminus S}(x)-1 & \in \Sigma[x] \label{eq:secondSOS} \\
 - \Tilde{\mathbf{v}}(x)-\xi_{X_0}(x) \mathbf{s}_{X_0}(x) +\delta & \in \Sigma[x] \label{eq:thirdSOS} \\
-\int_{X} \Tilde{\mathbf{v}}(x') \mathsf{K}(\mathrm{d}x'\mid x) +\Tilde{\mathbf{v}}(x)/\alpha+\beta &    \nonumber   \\ 
 -\xi_{S}(x) \mathbf{s}_{S}(x)  &\in \Sigma[x] \label{eq:forthSOS} \\
  \Tilde{\mathbf{v}}(x)\alpha^{-T}+\left(\sum_{i=0}^{T-1}\alpha^{-i}\right)\beta-1 &\nonumber \\
  - \xi_{X \setminus S}(x) \mathbf{s}_{X \setminus S}(x)   &\in \Sigma[x]    \label{eq:fifthSOS}
\end{align} 
 \end{tcolorbox}
\end{center}
\end{tcolorboxenv} 
\begin{mythm}\label{thm:barrier}
Suppose that A1 and A2 in Assumption \ref{assumption:polynomial} hold and $\alpha >0 $ is given.
Let $\Tilde{\mathbf{v}}(x) \in \mathbb{R}[x],\xi_{X_0}(x),\xi_{S}(x)$, $\xi_{X\setminus S}(x) \in \Sigma[x]$, $\delta$ and $\beta$ be the solutions to SOS program \eqref{eq:objSOS}-\eqref{eq:fifthSOS}. 
Then we have
\begin{equation}\label{eq:barrierresult}
\forall x \in X_0: 1- \mathsf{SA}_{x}(S)\leq  \delta\alpha^{-T}+\left(\sum_{i=0}^{T-1}\alpha^{-i}\right) \beta .
\end{equation}
\end{mythm}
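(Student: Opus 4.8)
The plan is to show that the SOS constraints \eqref{eq:firstSOS}--\eqref{eq:fifthSOS} are merely a sum-of-squares certification of the three defining conditions of a DSBC in Theorem~\ref{thm:dpbarrier}, together with the bound $\Tilde{\mathbf{v}}(x)\le\delta$ on $X_0$; once this translation is made, the conclusion \eqref{eq:barrierresult} follows immediately from Theorem~\ref{thm:dpbarrier}. First I would recall the standard Positivstellensatz-type observation used throughout the barrier-certificate literature: if $p(x)-\xi(x)\mathbf{s}_\star(x)\in\Sigma[x]$ for some $\xi\in\Sigma[x]$, then $p(x)\ge 0$ for every $x$ in the super-level set $\star=\{x:\mathbf{s}_\star(x)\ge0\}$, because at such $x$ we have $\xi(x)\mathbf{s}_\star(x)\ge0$ and the SOS term is nonnegative. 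Applying this to \eqref{eq:firstSOS} with $\star=S$ gives $\Tilde{\mathbf{v}}(x)\ge0$ on $S$; applying it to \eqref{eq:secondSOS} with $\star=X\setminus S$ gives $\Tilde{\mathbf{v}}(x)\ge1$ on $X\setminus S$. Together these yield $\Tilde{\mathbf{v}}\ge\mathbf{1}_{X\setminus S}$, i.e.\ condition (1) of Theorem~\ref{thm:dpbarrier}.

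Next I would treat \eqref{eq:forthSOS} and \eqref{eq:fifthSOS} the same way. Constraint \eqref{eq:forthSOS}, restricted to $S$, gives $-\int_X\Tilde{\mathbf{v}}(x')\mathsf{K}(\mathrm{d}x'\mid x)+\Tilde{\mathbf{v}}(x)/\alpha+\beta\ge0$ for all $x\in S$, which is exactly condition (2) of Theorem~\ref{thm:dpbarrier}; here one should note that the left-hand side is affine in the unknowns $\Tilde{\mathbf{v}},\beta$ and polynomial in $x$ once $\alpha>0$ is fixed, so it is a legitimate SOS constraint. Constraint \eqref{eq:fifthSOS}, restricted to $X\setminus S$, gives $\Tilde{\mathbf{v}}(x)\alpha^{-T}+\bigl(\sum_{i=0}^{T-1}\alpha^{-i}\bigr)\beta-1\ge0$ on $X\setminus S$, which is condition (3). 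Thus $\Tilde{\mathbf{v}}$ is a DSBC with parameters $\alpha,\beta$, and Theorem~\ref{thm:dpbarrier} gives $1-\mathsf{SA}_x(S)\le\Tilde{\mathbf{v}}(x)\alpha^{-T}+\bigl(\sum_{i=0}^{T-1}\alpha^{-i}\bigr)\beta$ for every $x\in X$.

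Finally I would use \eqref{eq:thirdSOS}: restricted to $X_0$, it gives $-\Tilde{\mathbf{v}}(x)+\delta\ge0$, i.e.\ $\Tilde{\mathbf{v}}(x)\le\delta$ for all $x\in X_0$. Substituting this into the DSBC bound and using $\alpha^{-T}>0$ yields, for every $x\in X_0$,
\[
1-\mathsf{SA}_x(S)\le \Tilde{\mathbf{v}}(x)\alpha^{-T}+\Bigl(\textstyle\sum_{i=0}^{T-1}\alpha^{-i}\Bigr)\beta \le \delta\alpha^{-T}+\Bigl(\textstyle\sum_{i=0}^{T-1}\alpha^{-i}\Bigr)\beta,
\]
which is \eqref{eq:barrierresult}. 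The argument is essentially bookkeeping; the only point needing a little care—and the place I would expect a careful reader to pause—is checking that each SOS constraint is genuinely \emph{linear} in the decision variables (so that the program is a valid SOS/SDP feasibility problem and the objective \eqref{eq:objSOS} is linear), which forces $\alpha$ to be fixed beforehand rather than being a decision variable, and that the integral term $\int_X\Tilde{\mathbf{v}}(x')\mathsf{K}(\mathrm{d}x'\mid x)$ is itself a polynomial in $x$ under Assumption~\ref{assumption:polynomial}~A1 (moments of the disturbance against a polynomial in $f(x,w)$), so that \eqref{eq:forthSOS} is a bona fide polynomial SOS constraint. Everything else is a direct invocation of Theorem~\ref{thm:dpbarrier}.
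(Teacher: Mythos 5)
Your proposal is correct and follows essentially the same route as the paper's own proof: use the standard SOS/multiplier argument to translate each constraint \eqref{eq:firstSOS}--\eqref{eq:fifthSOS} into the corresponding DSBC condition of Theorem~\ref{thm:dpbarrier} plus the bound $\Tilde{\mathbf{v}}(x)\leq\delta$ on $X_0$, then invoke Eq.~\eqref{eq:dpbound}. Your added remarks on linearity in the decision variables and on the integral term being polynomial are sensible sanity checks but go slightly beyond what the paper's proof spells out.
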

\begin{proof}
First, we prove that the computed polynomial $\Tilde{\mathbf{v}}(x) \in \mathbb{R}[x]$ is a DSBC as defined in Theorem~\ref{thm:dpbarrier}. 
We show that the satisfaction of \eqref{eq:firstSOS} implies satisfaction of $\Tilde{\mathbf{v}}(x) \geq 0, \forall x \in S$. Specifically, if \eqref{eq:firstSOS} holds, then for any $x \in S$, we have
 \[
 \begin{aligned}
   \Tilde{\mathbf{v}}(x) \geq  \Tilde{\mathbf{v}}(x)-\xi_{S}(x) \mathbf{s}_{S}(x)\geq 0.
 \end{aligned}
 \]
 The first inequality comes from the fact that $\mathbf{s}_{S}(x)\xi_{S}(x) \geq 0$ holds for all $x \in S$. The last equality holds since constraint \eqref{eq:firstSOS} ensures that   expression $\Tilde{\mathbf{v}}(x)-\xi_{S}(x) \mathbf{s}_{S}(x)$ is an SOS polynomial. 
 Similarly, we can prove that if \eqref{eq:secondSOS}; \eqref{eq:thirdSOS}; \eqref{eq:forthSOS}; \eqref{eq:fifthSOS} hold, then $\Tilde{\mathbf{v}}(x) \geq 1, \forall x \in X\setminus S$; $\Tilde{\mathbf{v}}(x) \leq \delta, \forall x \in X_0$; $ \int_{X} \Tilde{\mathbf{v}}(x') \mathsf{K}(\mathrm{d}x'\mid x) \leq \frac{\Tilde{\mathbf{v}}(x)}{\alpha} + \beta, \forall x \in S$; $\Tilde{\mathbf{v}}(x)\alpha^{-T}+\left(\sum_{i=0}^{T-1}\alpha^{-i}\right) \beta\geq 1, \forall x \in X\setminus S$ hold, respectively. 
 Then by definition, we know that the computed polynomial $\Tilde{\mathbf{v}}$ is indeed a DSBC. 
 Finally, by Eq.~\eqref{eq:dpbound} and $\Tilde{\mathbf{v}}(x) \leq \delta, \forall x \in X_0$, we know that Eq.~\eqref{eq:barrierresult} holds. This completes the proof.
\end{proof}
 \begin{tcolorboxenv}[t]
\begin{center}
        \tcbset{title=\quad SOS Program for Searching RABC}
    \begin{tcolorbox} 
        \vspace{-15pt}
\begin{align}
 \max_{\delta, \beta \in \mathbb{R}} \quad \delta\alpha^{-T}+\left(\sum_{i=0}^{T-1}\alpha^{-i}\right)\beta &  \label{eq:objSOSRA} \\
 -\Tilde{\mathbf{v}}(x)-\xi_{X\setminus G}(x) \mathbf{s}_{X\setminus G}(x) & \in \Sigma[x]  \label{eq:firstSOSRA} \\
 -\Tilde{\mathbf{v}}(x)-\xi_{G}(x) \mathbf{s}_{G}(x)+1 & \in \Sigma[x] \label{eq:secondSOSRA} \\
 \Tilde{\mathbf{v}}(x)-\xi_{X_0}(x) \mathbf{s}_{X_0}(x) -\delta & \in \Sigma[x] \label{eq:thirdSOSRA} \\
\int_{X} \Tilde{\mathbf{v}}(x') \mathsf{K}(\mathrm{d}x'\mid x) -\Tilde{\mathbf{v}}(x)/\alpha-\beta &    \nonumber   \\ 
 -\xi_{S\setminus G}(x) \mathbf{s}_{S\setminus G}(x)  &\in \Sigma[x] \label{eq:forthSOSRA} \\
  -\Tilde{\mathbf{v}}(x)\alpha^{-T}-\left(\sum_{i=0}^{T-1}\alpha^{-i}\right)\beta+1 &\nonumber \\
  - \xi_{G}(x) \mathbf{s}_{G}(x)   &\in \Sigma[x]    \label{eq:fifthSOSRA} \\
    -\Tilde{\mathbf{v}}(x)\alpha^{-T}-\left(\sum_{i=0}^{T-1}\alpha^{-i}\right)\beta &\nonumber \\
  - \xi_{X \setminus S}(x) \mathbf{s}_{X \setminus S}(x)   &\in \Sigma[x]    \label{eq:sixthSOSRA}
\end{align} 
 \end{tcolorbox}
\end{center}
\end{tcolorboxenv}

For reach-avoid verifications, we also propose SOS program \eqref{eq:objSOSRA}-\eqref{eq:sixthSOSRA} to search for the RABC.
Similar to the program for DSBC,  $\Tilde{\mathbf{v}}$ is the computed RABC. 
Moreover, the constraints \eqref{eq:firstSOSRA}-\eqref{eq:sixthSOSRA} ensure $\Tilde{\mathbf{v}}$ to satisfy the conditions of RABC in Theorem~\ref{thm:dpreach-avoidbarrier}. 
Then we have the following result. 

\begin{mythm}\label{thm:barrierRA}
Suppose that A1 and A3 in Assumption \ref{assumption:polynomial} hold and $\alpha >0 $ is given.
Let $\Tilde{\mathbf{v}}(x) \in \mathbb{R}[x],\xi_{X_0}(x),\xi_{G}(x),\xi_{S\setminus G}(x),\xi_{X\setminus S}(x),$ $\xi_{X\setminus G}(x) \in \Sigma[x]$, $\delta$ and $\beta$ be the solutions to SOS program \eqref{eq:objSOSRA}-\eqref{eq:sixthSOSRA}. 
Then we have
\begin{equation}\label{eq:barrierresultRA}
\forall x \in X_0:  \mathsf{RA}_{x}(G,S)\geq  \delta\alpha^{-T}+\left(\sum_{i=0}^{T-1}\alpha^{-i}\right) \beta .
\end{equation}
\end{mythm}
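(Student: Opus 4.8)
The plan is to mirror the proof of Theorem~\ref{thm:barrier}: first show that the polynomial $\tilde{\mathbf{v}}$ returned by the SOS program \eqref{eq:objSOSRA}--\eqref{eq:sixthSOSRA} is a RABC in the sense of Theorem~\ref{thm:dpreach-avoidbarrier}, and then invoke the bound \eqref{eq:DPRAbound} together with the inequality $\tilde{\mathbf{v}}(x)\geq\delta$ on $X_0$ to conclude \eqref{eq:barrierresultRA}.

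The workhorse, used for each constraint, is the elementary Positivstellensatz-type implication: if $p(x)-\xi_\star(x)\,\mathbf{s}_\star(x)\in\Sigma[x]$ with $\xi_\star\in\Sigma[x]$ and $\star=\{x:\mathbf{s}_\star(x)\geq 0\}$ as in item A3 of Assumption~\ref{assumption:polynomial}, then for every $x\in\star$ we have $p(x)\geq \xi_\star(x)\mathbf{s}_\star(x)\geq 0$, since the SOS remainder is nonnegative everywhere and $\xi_\star(x)\mathbf{s}_\star(x)\geq 0$ on $\star$. Applying this: \eqref{eq:firstSOSRA} gives $\tilde{\mathbf{v}}(x)\leq 0$ for $x\in X\setminus G$ and \eqref{eq:secondSOSRA} gives $\tilde{\mathbf{v}}(x)\leq 1$ for $x\in G$; since $G$ and $X\setminus G$ partition $X$, these combine into condition 1) of Theorem~\ref{thm:dpreach-avoidbarrier}, i.e.\ $\tilde{\mathbf{v}}\leq\mathbf{1}_G$. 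Constraint \eqref{eq:forthSOSRA} yields $\int_X\tilde{\mathbf{v}}(x')\mathsf{K}(\mathrm{d}x'\mid x)\geq \tilde{\mathbf{v}}(x)/\alpha+\beta$ for $x\in S\setminus G$, which is condition 2); constraint \eqref{eq:sixthSOSRA} yields $\tilde{\mathbf{v}}(x)\alpha^{-T}+(\sum_{i=0}^{T-1}\alpha^{-i})\beta\leq 0$ for $x\in X\setminus S$, which is condition 3); and constraint \eqref{eq:fifthSOSRA} yields $\tilde{\mathbf{v}}(x)\alpha^{-T}+(\sum_{i=0}^{T-1}\alpha^{-i})\beta\leq 1$ for $x\in G$, which is condition 4). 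Hence $\tilde{\mathbf{v}}$ is a RABC with the given $\alpha$ and the computed $\beta$.

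By Theorem~\ref{thm:dpreach-avoidbarrier} it then follows that $\mathsf{RA}_x(G,S)\geq \tilde{\mathbf{v}}(x)\alpha^{-T}+(\sum_{i=0}^{T-1}\alpha^{-i})\beta$ for all $x\in X$. Finally, constraint \eqref{eq:thirdSOSRA} gives $\tilde{\mathbf{v}}(x)\geq\delta$ for every $x\in X_0$; since $\alpha>0$ implies $\alpha^{-T}>0$, this yields $\tilde{\mathbf{v}}(x)\alpha^{-T}+(\sum_{i=0}^{T-1}\alpha^{-i})\beta\geq \delta\alpha^{-T}+(\sum_{i=0}^{T-1}\alpha^{-i})\beta$, and chaining the two inequalities gives \eqref{eq:barrierresultRA} for all $x\in X_0$.

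I do not expect a substantive obstacle: the argument is a direct transcription of the DSBC case with the inequality directions reversed, because reach-avoid verification requires a \emph{lower} bound on $\hat{\mathbf{v}}_0$ rather than an upper bound on $\mathbf{v}_0$. The only place requiring care is the sign bookkeeping — each SOS constraint in \eqref{eq:firstSOSRA}--\eqref{eq:sixthSOSRA} is written with the opposite sign to its counterpart in \eqref{eq:firstSOS}--\eqref{eq:fifthSOS}, and one must check that \eqref{eq:firstSOSRA} and \eqref{eq:secondSOSRA} together constrain $\tilde{\mathbf{v}}$ over all of $X$ (using $G\subseteq S$, so that $G$ and $X\setminus G$ genuinely cover $X$) and that \eqref{eq:objSOSRA} is being \emph{maximized}, so that the returned $\delta$ is a valid and as-tight-as-possible lower-bound witness.
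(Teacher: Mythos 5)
Your proposal is correct and follows essentially the same route as the paper's proof: verify that each SOS constraint \eqref{eq:firstSOSRA}--\eqref{eq:sixthSOSRA} enforces the corresponding RABC condition of Theorem~\ref{thm:dpreach-avoidbarrier} (with \eqref{eq:thirdSOSRA} giving $\tilde{\mathbf{v}}(x)\geq\delta$ on $X_0$), then apply the bound \eqref{eq:DPRAbound} and the positivity of $\alpha^{-T}$. Your sign bookkeeping, including the observation that $G$ and $X\setminus G$ together cover $X$ so that \eqref{eq:firstSOSRA} and \eqref{eq:secondSOSRA} jointly yield $\tilde{\mathbf{v}}\leq\mathbf{1}_G$, matches what the paper leaves implicit.
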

\begin{proof}
First, we prove that the computed polynomial $\Tilde{\mathbf{v}}(x) \in \mathbb{R}[x]$ is a RABC as defined in Theorem~\ref{thm:dpreach-avoidbarrier}.
Specifically, we can prove similar with Theorem~\ref{thm:barrier} that if (a) \eqref{eq:firstSOSRA} and \eqref{eq:secondSOSRA}; (b) \eqref{eq:forthSOSRA}; (c) \eqref{eq:fifthSOSRA}; (d) \eqref{eq:sixthSOSRA} hold, then $\Tilde{\mathbf{v}}$ satisfies the condition of (a) 1); (b) 2); (c) 4); (d) 3) in Theorem~\ref{thm:dpreach-avoidbarrier}, respectively.
 Then by definition, we know that the computed polynomial $\Tilde{\mathbf{v}}$ is indeed a RABC. 
 Moreover, we can get $\Tilde{\mathbf{v}}(x) \geq \delta, \forall x \in X_0$ from \eqref{eq:thirdSOSRA}.
 Finally, by Eq.~\eqref{eq:DPRAbound}, we know that Eq.~\eqref{eq:barrierresultRA} holds. This completes the proof.
\end{proof}
In the proposed SOS programs, the integral term $\int_{X} \Tilde{\mathbf{v}}(x') \mathsf{K}(\mathrm{d}x'\mid x)$ in Eq.~\eqref{eq:forthSOS} involves computing the $n$-th moment of a random variable. This is a computationally challenging task for an arbitrary distribution. However, for many widely used random variables representing system noise, such as those following a uniform distribution or a zero-mean normal distribution, closed-form expressions can be derived. 
For example, the expected value of the $n$-th moment of a uniform random variable $z$ over the interval $[a, b]$ is given by:
\begin{equation}\label{eq:uniform}
    \mathbb{E}[z^n] = \frac{b^{n+1}-a^{n+1}}{(n+1)(b-a)}.
\end{equation}
When the random variable $z$ follows a zero-mean normal distribution with standard deviation $\sigma$, we have: 
\begin{align}\label{eq:normal}
\mathbb{E}[z^n] = 
		\left\{
		\begin{array}{ll}
			0 &  \text{if } n \text{ is odd}\\
		1\cdot 3\cdots (n-1) \sigma^n      &  \text{if } n \text{ is even}
		\end{array}.
		\right.   
\end{align}

\section{Case Study} \label{sec:casestudy}

\begin{table*}[ht]
  \centering
\caption{Upper bound of unsafe probability $1-\textsf{SA}_{x_0}(S)$ by different barrier certificates under different $\alpha$ in Example 1}\label{tab:example 1}
  \begin{threeparttable}[b]
     \begin{tabular}{cccccccccccc}
      \toprule
      $\alpha$ value & $0.99$ & $0.992$ & $0.994$ & $0.996$ & $0.998$ & $1$ & $1.002$ & $1.004$ & $1.006$ & $1.008$ & $1.01 $ \\ 
       \midrule
     DBC \textbf{(Ours)} & $0.9681$ & $0.8759$ & $0.793$ & $0.7178$ & $0.65$ & $0.5896$ & $0.5891$ & $0.5895$ & $0.59$ & $0.5906$ & $0.5915$\\
     MBC (Lemma~\ref{lem:martingalebarrier})  & $\backslash$ & $\backslash$ & $\backslash$ & $\backslash$ & $\backslash$  & $0.5903$ & $0.6225$ & $0.6565$ & $0.6881$ & $0.7167$ & $0.7428$\\
     SBC (Remark~\ref{remark:SBC})  & $0.9681$ & $0.8759$ & $0.793$ & $0.7177$ & $0.65$  & $0.5907$ & $\backslash$ & $\backslash$ & $\backslash$ & $\backslash$ & $\backslash$\\
      \bottomrule
    \end{tabular}
  \end{threeparttable}
\end{table*}
\begin{table*}[ht]
  \centering
\caption{Upper bound of unsafe probability $1-\textsf{SA}_{x_0}(S)$ by different barrier certificates under different $\alpha$ in Example 2}\label{tab:example 2}
  \begin{threeparttable}[b]
     \begin{tabular}{cccccccccccc}
      \toprule
      $\alpha$ value & $0.9$ & $0.92$ & $0.94$ & $0.96$ & $0.98$ & $1$ & $1.02$ & $1.04$ & $1.06$ & $1.08$ & $1.1 $ \\
       \midrule
     DBC \textbf{(Ours)} & $0.249$ & $0.2176$ & $0.1984$ & $0.19$ & $0.1929$ & $0.21$ & $0.5$ & $0.69$ & $0.8129$ & $0.8936$ & $0.9465$  \\
     MBC (Lemma~\ref{lem:martingalebarrier})  & $\backslash$ & $\backslash$ & $\backslash$ & $\backslash$ & $\backslash$  & $0.21$ & $0.5$ & $0.69$ & $0.8129$ & $0.8936$ & $0.9465$\\
     SBC (Remark~\ref{remark:SBC})  & $0.2491$ & $0.2176$ & $0.1984$ & $0.19$ & $0.1929$  & $0.21$ & $\backslash$ & $\backslash$ & $\backslash$ & $\backslash$ & $\backslash$\\
      \bottomrule
    \end{tabular}
  \end{threeparttable}
\end{table*}
 \begin{table*}[ht]
  \centering
\caption{Value of $\gamma=\alpha\beta-\alpha+1$ for DBCs under different values of $\alpha$ in two examples. 
\\The value index represents the corresponding column  of $\alpha$ values in Table \ref{tab:example 1} and Table \ref{tab:example 2}}\label{tab:gammavalue}
  \begin{threeparttable}[b]
     \begin{tabular}{cccccccccccc}
      \toprule
    Value index & 1 & 2 & 3 & 4 & 5 & 6 & 7 & 8 & 9 & 10 & 11 \\ 
      \midrule
     Example 1 & $0.01$ & $0.008$ & $0.006$ & $0.004$ & $0.002$ & $8\times 10^{-6}$ & $-0.002$ & $-0.004$ & $-0.006$ & $-0.008$ & $-0.01$ \\
     Example 2 & $0.1031$ & $0.0837$ & $0.0644$ & $0.0456$ & $0.0273$ & $0.01$ & $0.0102$ & $0.0104$ & $0.0106$ & $0.0108$ & $0.011$ \\
      \bottomrule
    \end{tabular}
  \end{threeparttable}
\end{table*}
 \begin{table*}[ht]
  \centering
\caption{Lower bound of reach-avoid probability $\textsf{RA}_{x_0}(G,S)$ by different method under different degrees in Example 1}\label{tab:reachavoidcompa}
  \begin{threeparttable}[b]
     \begin{tabular}{cccccccccccc}
      \toprule
    Degree of polynomials & 2 & 4 & 6 & 8 & 10 & 12 & 14  \\ 
      \midrule
     RABC (\textbf{Ours}) & $0.2027$ & $0.3689$ & $0.3995$ & $0.4136$ & $0.4953$ & $0.5547$ & $0.6080$ \\
     (13) in \cite{xue2024finite} & $0.1591$ & $0.2824$ & $0.3453$ & $0.3669$ & $0.4606$ & $0.5218$ & $0.5732$  \\
      \bottomrule
    \end{tabular}
  \end{threeparttable}
\end{table*}

In this section, 
we illustrate our  proposed new barrier certificates 
and compare them with existing methods using two examples adapted from \cite{xue2024finite}.

\textbf{Example 1:} Consider the following discrete-time system
\[
x(t+1)=f(x(t),w(t))=(-0.5+w(t))x(t),
\]
where $x(t) \!\in\! X\!=\!\mathbb{R}$, $x_0\!=\!-0.9$, $w(t)$ follows the uniform distribution over $W=[-1,1]$ and time horizon $T=50$. The safety set is $S=[-1,1]$ and the exact unsafe probability  obtained by exhaustive computation is $1-\mathsf{SA}_{x_0}(S)=0.2321$.
The target set is $G=[-0.6,0.6]$ and the exact  reach-avoid probability  obtained by exhaustive computation  is $\mathsf{RA}_{x_0}(G,S)=0.7708$.

\textbf{Example 2:} 
Consider the  following discrete-time system 
\[
x(t+1)=f(x(t),w(t))=x(t)+w(t)
\]
where $x(t)\!\in\! X \!=\!\mathbb{R}$, $X_0=[-0.1,0.1]$, $w(t)$ follows the normal distribution over $W=\mathbb{R}$ with  zero mean and $0.1$ standard variation. The time horizon is $T=20$ and the safety set is $S=[-1,1]$. 

\textbf{Experiments for Safety Verification: }
For both Examples~1 and 2, we use the SOS program \eqref{eq:objSOS}-\eqref{eq:fifthSOS} to search for DSBCs under different values of $\alpha > 0$. For comparison, we also search for the existing MSBC defined in Lemma~\ref{lem:martingalebarrier} and the SSBC discussed in Remark~\ref{remark:SBC} using SOS programs. As discussed, it is without loss of generality to consider only the case $\gamma \geq 0$ for MSBC and SSBC. Therefore, compared to the SOS program \eqref{eq:objSOS}-\eqref{eq:fifthSOS}, the SOS programs for searching MSBC and SSBC will omit the constraint \eqref{eq:fifthSOS} and include the constraint $\gamma = \alpha\beta - \alpha + 1 \geq 0$. The degree of polynomials in the SOS programs is set to $6$ and $4$ for Example $1$ and $2$, respectively.

Table~\ref{tab:example 1} and \ref{tab:example 2} present the experimental results for Examples $1$ and $2$, respectively. 
According to the tables, our DSBC generates valid bounds for all values of $\alpha$, whereas the MSBC and SSBC are only valid for certain values of $\alpha$.
Additionally, in Example $1$, the DSBC provides tighter bounds for the unsafe probability compared to the MSBC when $\alpha \geq 1$. 
Therefore, compared to existing approaches, our proposed DSBC not only increases the success rate for finding a valid barrier certificate but also provides more precise probability bounds.
On the other hand, the proposed DSBC generates probability bounds that are identical to those produced by the other methods in Example $2$, which means that the improvement of our barrier certificates does not always happen when existing bounds are already quite tight.

As illustrated in Table~\ref{tab:gammavalue}, if $\gamma=\alpha\beta-\alpha+1<0$ for computed DSBC, our DSBC will have better result than other methods. Specifically, if the computed DBC satisfies $\gamma < 0$, it means that the condition $\gamma\geq 0$ in existing barrier certificate is an active constraint in SOS optimization program. Since DSBC replaces this constraint by a relaxed one, it can generate better result in such case. 
Moreover, we find that the performance of MSBC and SSBC are dependent on the considered systems. The MSBC has better performance in Example 1 while the situation is opposite in Example 2. Therefore, since the proposed DSBC unifies two existing barrier certificates, we can find the best probability bound by only considering the DSBC. 

\textbf{Experiments for Reach-Avoid Specification: } 
Here, in order to verify reach-avoid probability, we use the SOS program \eqref{eq:objSOSRA}-\eqref{eq:sixthSOSRA} to search for the RABC only for Example~1. 
As discussed before, the most popular $c$-martingale approach can only provides upper bound of the reach-avoid probability for discrete time stochastic system with finite horizon, which is unsuitable for reach-avoid verification.
The only related method is \cite{xue2024finite}, which also constructed a set of conditions to provide a lower bound for reach-avoid probability by SOS program.  This method requires two parameters $\alpha>1$ and $\beta \geq 1-\alpha$ to be given ahead for the linearity of the objective function of optimization problem. On the contrary, our RABC only requires the value of $\alpha$ to be given and can find optimal $\beta$ directly.

For the sake of comparison, we compute the proposed RABC with $\alpha=1.06$, and (13) in \cite{xue2024finite} with $\alpha=1.06$ and $\beta=0$ using different degrees of polynomials. The experimental results are shown in Table~\ref{tab:reachavoidcompa}. Compared with the method in \cite{xue2024finite}, our proposed RABC  can generate better bound for reach-avoid probability. Moreover, as the degree of polynomials in SOS programs increases, both methods can provide larger lower bound of reach-avoid probability and the performance gap between methods becomes closer.

\section{Conclusion}\label{sec:con}
In this work, we revisited the formal verification problem for stochastic dynamical systems using barrier certificates. 
We first provided an alternative proof for the existing martingale-based barrier certificates for safety specifications from the perspective of dynamic programming. Building on this new perspective, we further proposed a new type of barrier certificate that not only unifies existing ones but also provides more relaxed constraints. 
We showed such construction can also be applied to the reach-avoid verification problem.  
We demonstrated how the new barrier certificates can be searched for using SOS programs. 
Our results offer new insights into barrier certificates for stochastic dynamical systems, and we believe there are several promising future directions stemming from this perspective. For instance, an immediate direction is to extend our approach to more general specifications, such as  temporal logic specifications. Additionally, we aim to explore how our approach can be applied to synthesizing formal control strategies \cite{jagtap2020formal,zhong2025secure}, beyond verification. 

\bibliographystyle{ieeetr}
\bibliography{main}

\end{document}